\newcommand{\MSE}{\mathrm{MSE}}
\newcommand{\E}{\mathbb{E}}
\newcommand{\IP}{\mathrm{IP}}
\newcommand{\BCS}{\mathrm{BCS}}
\newcommand{\CBE}{\mathrm{CBE}}
\newcommand{\DOPH}{\mathrm{DOPH}}
\newcommand{\JS}{\mathrm{JS}}
\newcommand{\tb}{\mathrm{\psi}}
\newcommand{\IPS}[1]{\langle #1 \rangle}
\newcommand{\F}{\mathrm{F_1}}
\newcommand{\N}{\mathrm{N}}
\newcommand{\Ham}{\mathrm{Ham}}
\newcommand{\dist}{\mathrm{dist}}
\newcommand{\Dist}{\mathrm{D}}
\newcommand{\binsketch}{\mathrm{BinSketch}}
\newcommand{\odd}{\mathrm{OddSketch}}
\newcommand{\simhash}{\mathrm{SimHash}}
\newcommand{\minhash}{\mathrm{MinHash}}
\newcommand{\R}{\mathbb{R}}
\newcommand{\Cos}{\mathrm{Cos}}
 \newtheorem{theorem}{Theorem}
 \newtheorem{lemma}[theorem]{Lemma}
 \newtheorem{definition}[theorem]{Definition}
\newtheorem{obs}[theorem]{Observation}
\newtheorem{cor}[theorem]{Corollary}
\begin{document}


\title{Efficient Sketching Algorithm for Sparse Binary Data\\
}

\author{\IEEEauthorblockN{1\textsuperscript{st} Rameshwar Pratap}
\IEEEauthorblockA{\textit{School of Computing and Electrical Engineering} \\
\textit{IIT Mandi, H.P.}\\
India \\
rameshwar@iitmandi.ac.in}
\and
\IEEEauthorblockN{2\textsuperscript{nd} Debajyoti Bera}
\IEEEauthorblockA{\textit{Department of Computer Science} \\
\textit{IIIT Delhi}\\
India \\
dbera@iiitd.ac.in}
\and
\IEEEauthorblockN{3\textsuperscript{rd} Karthik Revanuru}
\IEEEauthorblockA{\textit{NALT Analytics} \\
\textit{Bangalore, India}\\
India\\
karthikrvnr@gmail.com}
}

\maketitle

\begin{abstract}
Recent advancement of the WWW, IOT, social network, e-commerce, etc. have generated a large volume of data. These datasets are mostly represented by high dimensional and sparse datasets. Many fundamental subroutines of common data analytic tasks such as clustering, classification, ranking, nearest neighbour search, etc. scale poorly with the dimension of the dataset. In this work, we address this problem and propose a sketching (alternatively, dimensionality reduction) algorithm -- $\binsketch$ (Binary Data  Sketch) -- for sparse binary datasets. $\binsketch$ preserves the binary version of the dataset after sketching and maintains estimates for multiple similarity measures such as Jaccard, Cosine, Inner-Product similarities, and Hamming distance, on the same sketch.  We present a theoretical analysis of our algorithm and complement it with extensive experimentation on several real-world datasets. We compare the performance of our algorithm with the state-of-the-art algorithms on the task of mean-square-error and ranking. Our proposed algorithm offers a  comparable accuracy while suggesting  a significant speedup in the dimensionality reduction time, with respect to the other candidate algorithms. Our proposal is simple, easy to implement, and therefore can be adopted in practice. \footnote{A preliminary version of this paper has been accepted at IEEE-ICDM, 2019.}
\end{abstract}

\section{Introduction}
Due to technological advancements, recent years have witnessed a dramatic increase in our ability to collect data from various sources like WWW, IOT, social media platforms, mobile applications,   finance, and biology.
For example, in many web applications, the volume of datasets are of the terascale order, with trillions of features~\cite{agarwal14a}. 
The high dimensionality incurs high memory requirements and computational cost during the training. 
Further, most of such high dimensional datasets are  sparse, owing to a 
wide adaption of ``Bag-of-words" (BoW) representations. For example: 
in the case of document representation,  word frequency within a document follows power law -- most 
of the words occur rarely in a document,  and higher order shingles occur only once. 
We focus on the binary representation of the datasets which is quite common in several applications~\cite{sibyl,JiangNY07}.

Measuring similarity score of data points under various similarity measures is a fundamental subroutine in several 
applications such as clustering, classification, identifying nearest neighbors, ranking,    and it plays  an important role in various  data mining, machine learning, and information retrieval tasks.
However, due to the 
``curse of dimensionality" a brute-force way of computing the similarity scores in the high dimensional dataset
is infeasible, and at times impossible. In this work, we address this question and propose an efficient dimensionality reduction algorithm for sparse binary datasets that generates a succinct sketch of the dataset while preserving estimates for computing the similarity score between data objects.

 \subsection{Our Contribution}  
We first informally describe our sketching algorithm.

BinSketch: (\textbf{Bin}ary Data \textbf{Sketch}ing) Given a $d$-dimensional binary vector $a\in \{0,1\}^{d}$, our algorithm reduces it to a 
$\N$-dimensional binary vector $a_s\in\{0,1\}^{\N}$, where $\N$ is specified later. 
It randomly maps each bit position (say) $\{i\}_{i=1}^d$  to an integer $\{j\}_{j=1}^{\N}$. 
To compute the $j$-th bit of  $a_s$, 
it checks which bit positions have been mapped to $j$, computes the $\mathtt{bitwise-OR}$ of the bits located at those positions and 
assigns it to $a_s[j].$ 

A simple and exact solution to the problem is to represent each binary vector by a
(sorted) list (or vector) of the indices with value one. In this representation, 
the space required in storing a vector is $O(\tb \log d) $ bits -- as we need  $O(\log d)$ bits for storing each index,
and   there are at most $\tb$ indices with non-zero value (sparsity). Further, the time complexity of computing the (say) inner product of two originally $\tb$-sparse binary vectors is $O (\tb \log d )$.  Therefore,  both the storage as well as   the time complexity of calculating similarity depend on the original dimension $d$ and does not scale for large values of $d$. 
For high dimensional sparse binary data, we show how to construct highly compressed
binary sketches whose length depends only on the data sparsity.
Furthermore, we present techniques to compute similarity between vectors from
their sketches alone. Our main technique is presented in
Algorithm~\ref{alg:ip_est} for inner product similarity and the following
theorem summarizes it.

\begin{theorem}[Estimation of Inner product\label{thm:IP_est}]
Suppose we want to estimate the {\em Inner Product} of $d$-dimensional binary
vectors, whose sparsity is at most $\tb$, with probability at least $1-\rho$. We
can use $\binsketch$ to construct $\N$-dimensional binary sketches where
$\N=\tb \sqrt{\tfrac{\tb}{2} \ln \tfrac{2}{\rho} }$. If $a_s$ and $b_s$
denote the sketches of vectors $a$ and $b$, respectively, then $\IP(a,b)$ can be
    estimated with accuracy $O(\sqrt{\tb\ln \tfrac{6}{\rho}})$ using Algorithm~\ref{alg:ip_est}.
\end{theorem}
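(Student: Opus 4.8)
The plan is to reduce inner-product estimation to three invocations of a single-vector sparsity estimator, using the fact that $\binsketch$ commutes with bitwise OR. Write $\|x\|$ for the number of $1$s in a binary vector $x$, and let $h:[d]\to[\N]$ be the random map used by the algorithm, so that $x_s[j]=\bigvee_{i:\,h(i)=j}x[i]$ for every $x$; consequently $(a\lor b)_s = a_s\lor b_s$. Writing $S_a,S_b\subseteq[d]$ for the supports of $a$ and $b$, we get $\|a_s\|=|h(S_a)|$, $\|b_s\|=|h(S_b)|$, and $\|a_s\lor b_s\|=|h(S_a\cup S_b)|$, each being the number of occupied bins when the corresponding support is hashed into $\N$ bins. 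Since $\IP(a,b)=|S_a|+|S_b|-|S_a\cup S_b|$ by inclusion--exclusion, it suffices to recover each of these three support sizes from its occupied-bin count and combine them.

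The next step is to analyse the occupancy estimator $\widehat n(w):=\ln(1-w/\N)\,/\,\ln(1-1/\N)$, which is the inverse of the map $t\mapsto\N\bigl(1-(1-1/\N)^{t}\bigr)$ giving the expected number of occupied bins when $t$ balls land in $\N$ bins; in particular $\widehat n$ applied to that \emph{expected} count returns $t$ exactly. Two facts are needed. (i) Concentration: for a support $T$ of size $m$, $|h(T)|$ is a function of the $m$ independent hash values $\{h(i)\}_{i\in T}$, and changing one of them moves $|h(T)|$ by at most one, so McDiarmid's bounded-differences inequality gives $\Pr\bigl[\,\bigl|\,|h(T)|-\E|h(T)|\,\bigr|\ge t\,\bigr]\le 2\exp(-2t^{2}/m)$. (ii) Well-conditioning: on the window of weights within a few standard deviations of the mean, $\widehat n$ is Lipschitz with constant close to $1/(1-1/\N)^{m}$, and the choice $\N=\tb\sqrt{(\tb/2)\ln(2/\rho)}$ makes the load factor $m/\N$ (with $m\le 2\tb$) vanishingly small, so $(1-1/\N)^{m}=e^{-\Theta(m/\N)}$ and $1-w/\N$ both stay bounded away from $0$ over the whole window; hence the Lipschitz constant is $O(1)$. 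Combining (i) and (ii), for any support of size at most $2\tb$ the estimate $\widehat n$ of its size is within $O\bigl(\sqrt{\tb\ln(1/\rho')}\bigr)$ of the truth with probability at least $1-\rho'$.

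Finally I would set $\rho'=\rho/3$, apply the previous step to $S_a$, $S_b$ (sizes at most $\tb$) and to $S_a\cup S_b$ (size at most $2\tb$), and union-bound so that all three estimates are simultaneously accurate with probability at least $1-\rho$. Algorithm~\ref{alg:ip_est} returns $\widehat\IP(a,b)=\widehat n(\|a_s\|)+\widehat n(\|b_s\|)-\widehat n(\|a_s\lor b_s\|)$ (equivalently, with $\|a_s\|+\|b_s\|-\IP(a_s,b_s)$ as the last argument, since that equals $\|a_s\lor b_s\|$), so by the triangle inequality its error is at most the sum of the three individual errors, i.e.\ $O\bigl(\sqrt{\tb\ln(6/\rho)}\bigr)$; the $6/\rho$ appears after absorbing the factor $2$ from the two-sided tail and the factor $3$ from the union bound into the logarithm.

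I expect the crux to be part (ii): controlling the error amplification of the nonlinear inversion $\widehat n$. One must verify simultaneously that $\|x_s\|/\N$ never approaches $1$ (otherwise $\ln(1-\|x_s\|/\N)$ diverges) and that $\widehat n'$ stays $O(1)$ across the entire confidence window, and it is exactly this requirement that forces the sketch length to be $\N=\Theta\bigl(\tb^{3/2}\sqrt{\ln(1/\rho)}\bigr)$ rather than anything smaller. A secondary subtlety is that $S_a\cup S_b$ may have sparsity up to $2\tb$, so the single-vector estimate must be invoked at that slightly larger sparsity — which only affects the hidden constants.
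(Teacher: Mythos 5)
Your proposal is correct, and it takes a genuinely different --- and tidier --- route than the paper's. The structural move you make that the paper does not is the observation that $(a\lor b)_s = a_s\lor b_s$, combined with the identity $n^{n_a}+n^{n_b}+n_{a_s,b_s}/\N-1 = 1-\|a_s\lor b_s\|/\N$ showing that Algorithm~\ref{alg:ip_est} literally outputs $\widehat n(\|a_s\|)+\widehat n(\|b_s\|)-\widehat n(\|a_s\lor b_s\|)$: three copies of one single-set size estimator combined by inclusion--exclusion. This collapses the two-vector problem into three instances of the one-vector problem plus a triangle inequality. The paper instead works with $\IPS{a_s,b_s}$ head-on: it derives $\E[\IPS{a_s,b_s}]$ in closed form (Lemma~\ref{lemma:E_ip_ab}), proves concentration of $\IPS{a_s,b_s}$ via a bespoke martingale over ``greyish'' bins (Lemma~\ref{lemma:3}), and then propagates the three error terms through the nonlinear estimator by bounding $|\ln(U/V)|$ in the appendix (Lemma~\ref{lemma:IP_estimation}). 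The underlying concentration tool is the same in both arguments (bounded differences / Azuma--Hoeffding on occupied-bin counts), and your well-conditioning step (ii) is essentially the paper's Lemma~\ref{lemma:2}, invoked a third time on the union support of size at most $2\tb$ rather than on the custom random variable $\IPS{a_s,b_s}$; the factor-$\sqrt{2}$ loss from the larger support is absorbed by the $O(\cdot)$. What your route buys is modularity and the elimination of both the greyish-bins martingale and the appendix algebra; what the paper's route buys is a reusable closed form for $\E[\IPS{a_s,b_s}]$ and an analysis phrased directly in the quantities the algorithm manipulates. Your identification of the conditioning of the logarithmic inversion as the binding constraint on $\N$ is exactly the role played by Observation~\ref{obs:1} and the $\max(U,V)\ge\tfrac{1}{2}$ bound in the paper, under the same ``reasonable $\tb$ and $\rho$'' caveat the authors assume.
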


We also present Algorithm~\ref{alg:ham_est} for estimating Hamming distance,
Algorithm~\ref{alg:js_est} for estimating Jaccard similarity and
Algorithm~\ref{alg:cos_est} for estimating Cosine similarity; all these
algorithms are designed based on Algorithm~\ref{alg:ip_est} and so follow
similar accuracy guarantees.

\textbf{Extension for categorical data compression.}
Our result can be easily extended for compressing Categorical datasets.   The categorical dataset consists of several categorical features.
Examples of categorical features are sex, weather, days in a 
week,  age group,   educational level, etc.
We  consider  a type of Hamming distance for defining the distance between two categorical data points. For two $d$ dimensional categorical data points $u$ and $v$,  the  distance between them  is defined as follows:
$\Dist(u, v)=\Sigma_{i=1}^d \dist(u[i], v[i])$, where 
\begin{align*}
 \dist(u[i], v[i])=\begin{cases}
    1, & \text{if~} u[i]\neq v[i], \\
    0, & \text{otherwise}.
  \end{cases}
\end{align*}

 In order to use $\binsketch$, we need to preprocess the datasets. 
We  first encode categorical feature via \textit{label-encoding} followed by \textit{one-hot-encoding}.  
In the label encoding step,  features are encoded as integers. For a given feature, if it has $m$ possible values,
we encode them with integers between $0$ and $m-1$. In one-hot-encoding step, we convert the feature value into a  
$m$ length binary string, where  $1$ is located at the position corresponding to the result of the label-encoding step. 
\footnote{Both label-encoder and one-hot-encoder  are available in \texttt{sklearn} as \texttt{labelEncoder} and \texttt{OneHotEncoder} packages.} 
This preprocessing converts categorical dataset to a binary dataset. 
Please note that after preprocessing Hamming distance between the binary version of the data points is equal to the corresponding 
categorical distance $\Dist(,)$, stated above.  
We can now compress the binary version of the 
dataset using $\binsketch$ and due to Algorithm~\ref{alg:ham_est}, the compressed representation maintains the Hamming distance. 

In Section~\ref{section:analysis} we present the proof of
Theorem~\ref{thm:IP_est} where we explain the theoretical reasons behind the effectiveness of
$\binsketch$. As is usually the case for hash functions, practical performance
often outshines theoretical bounds; so we conduct numerous experiments on public
  datasets. Based on our experiment results reported in
Section~\ref{sec:experiment} we make the claim that $\binsketch$ is the best option for
compressing sparse binary vectors while retaining similarity
for many of the commonly used measures. The accuracy obtained is comparable with the state-of-the-art
sketching algorithms, especially at high similarity regions, while taking
almost negligible time compared to similar sketching algorithms proposed so far.



\subsection{Related work} Our proposed algorithm is very similar in nature to the BCS algorithm~\cite{KulkarniP16,JS_BCS}, which 
suggests a randomized bucketing algorithm  where each index of the input is randomly assigned to one of the $O(\tb^2)$ buckets;
$\tb$ denotes the sparsity of the dataset. The sketch of an input vector is obtained by computing
the parity of the bits fallen in each bucket. 
We offer a better compression bound than theirs. 
For a pair of vectors,
their compression bounds are $O(\tb^2)$, while ours is $O(\tb \sqrt{\tb})$. This is also reflected in our empirical evaluations,  on small values of compression length,
$\binsketch$ outperforms $\BCS$.
However, the compression times (or dimensionality reduction time) of both the algorithms are somewhat comparable. 

For Jaccard Similarity, we compare the performance of  our algorithms with $\minhash$~\cite{BroderCFM98}, $\DOPH$~\cite{DOPH} -- a faster variant of $\minhash $, and $\odd$~\cite{oddsketch}. 
We would like to point out some key differences between $\odd$ and $\binsketch$. 
$\odd$  is two-step in nature that takes the sketch obtained by running $\minhash$ on the original data as input,  
and outputs binary sketch which maintains an estimate of the original Jaccard similarity. Due to this two-step nature, 
its compression time is higher (see Table~\ref{tab:comparision} and Figure~\ref{fig:compressiontime}).   The number of $\minhash$ functions 
used in $\odd$ 
(denoted by $k$) is a crucial parameter and the authors suggested using $k$ such that the pairwise symmetric difference is approximately $\N/2$. Empirically they suggest using $k=\N/(4(1-J))$, where $J$ is the similarity threshold.   
We argue that not only tuning $k$ is an important step but it is unclear how this condition will be satisfied for a diverse dataset, 
on the contrary, $\binsketch$ requires no such parameter. Furthermore, $\odd$ doesn't provide any closed form expression to estimate accuracy and confidence. However, the variance of the critical term of their estimator is linear in the size of the sketch, \textit{i.e.} $\N$. Whereas our confidence interval is of the order of $\sqrt{\tb}$ which could be far smaller compared to $\N$, even for non-sparse data. Finally, compared to the Poisson approximation based analysis used in $\odd$, we employed a tighter martingale-based analysis leading to (slightly) better concentration bounds (compare, e.g., the concentration bounds for estimating the size of a set from its sketch).

For Cosine Similarity, we compare $\binsketch$ with $\simhash$~\cite{simhash}, $\CBE$~\cite{CBE} -- a faster variant of $\simhash$, 
 $\minhash$~\cite{ShrivastavaWWW015}, using $\DOPH$~\cite{DOPH}  in the algorithm of~\cite{ShrivastavaWWW015} instead of $\minhash$. 
 For the Inner Product, $\BCS$~\cite{JS_BCS},   Asymmetric MinHash~\cite{ShrivastavaWWW015}, and Asymmetric  $\DOPH$ -- using
 $\DOPH$~\cite{DOPH} in~\cite{ShrivastavaWWW015},  were  the competing algorithms.      
In all these similarity measures,  for sparse binary datasets, our proposed algorithm is faster,    while simultaneously offering almost a similar performance as compared to the baselines. We experimentally compare the performance on several
real-world datasets and observed the results that are in line with these observations. 
Further, in order to get a sketch of size $\N$, our algorithm requires a lesser number of random bits, and requires only one pass to the datasets.  
These are the major reasons due to which we obtained good speedup in compression time.  We summarize this comparison in Table~\ref{tab:comparision}.
Finally, a major advantage of our algorithm, similar to~\cite{KulkarniP16,JS_BCS}, is that it gives one-shot sketching by maintaining estimates of multiple similarity measures in the same sketch; this is in contrast to usual sketches that are customized for a specific similarity.
 \begin{center}
   \begin{table}     
     \small
       \caption{\footnotesize{A comparison among the candidate algorithms, on the number of random bits and the compression time,  to get a sketch of length $\N$ of a single data object. Compression time includes both (i) time required to generate hash function, which is of order the number of random bits, (ii) time required to generate the sketch using the hash functions. The parameter $k$ for $\odd$ denotes the number of permutations required by an intermediate $\minhash$ step.
}}\label{tab:comparision}
  \begin{tabular}{|c|c|l|}
   \hline
    Algorithm                  				&No of random bits                 & Compression time    \\       
  \hline
    $\binsketch$              			        & $O( d \log \N) $                    &$O(d \log \N+\tb)$\\ 
    $\BCS$~\cite{KulkarniP16,JS_BCS}                    & $O( d \log \N) $                    &$O(d \log \N+\tb)$\\ 
    $\DOPH$~\cite{DOPH}                                 &$O(d \log d)$                        &$O(d \log d+\tb + \N)$\\
    $\CBE$~\cite{CBE}                                   &$O(d)$                               &$O( d \log d)$\\
    $\odd$~\cite{oddsketch}                             &$O(k(d \log d+\N))$                 & $O(k(d \log d+\N+\tb))$\\
    $\simhash$~\cite{simhash}                           &$O(d \N)$                            &$O((d +\tb)\N)$\\
    $\minhash$~\cite{BroderCFM98}                       &$O((d \log d) \N)$                   &$O((d \log d +\tb)\N)$\\
    \hline
%
\end{tabular}
\end{table}
\end{center}
\vspace{-0.5cm}
\paragraph{Connection with Bloom Filter} $\binsketch$ appears structurally similar to a Bloom filter with one hash function. 
The standard Bloom filter is a space-efficient data-structure for \textit{set-membership} queries; however, there is an alternative 
approach that can be used to estimate the intersection between two sets~\cite{BroderM03}. However, it is unclear how estimates for other 
similarity measures can be obtained. We answer this question positively and suggest estimates for all the four similarity measures in the same sketch. We also show that our estimates are strongly concentrated around their expected values. 

\subsection{Applicability of our results}\label{subsec:application}
For high dimensional sparse binary datasets,   $\binsketch$ due to its simplicity, efficiency, and performance, can
be used in numerous applications which require a sketch preserving Jaccard, cosine, Hamming distance or inner product similarity.
%

\paragraph*{Scalable Ranking and deduplication of documents}
Given a corpus of documents and a set of query documents, a goal is to find all
documents in the corpus that are ``similar'' to query documents under a given
similarity measure (e.g., Jaccard, cosine, inner product). 
This problem is a fundamental sub-routine in many applications like
near-duplicate data detection~\cite{MankuJS07,Henzinger06,BroderCPM00}, efficient document
similarity search~\cite{JiangS11,ShrivastavaWWW015}, plagiarism
detection~\cite{BuyrukbilenB13,BroderCPM00}, etc. and dimensionality reduction
is one way to address this problem.
In Subsection~\ref{subsection:ranking} we provide empirical validation that $\binsketch$ offers significant speed-up in dimensionality reduction while offering a comparable accuracy.

 \paragraph*{Scalable Clustering of documents}  
$\binsketch$ can be used in scaling up the performance of several clustering algorithms, in the case of high-dimensional and sparse datasets. 
For instance, in the case of  Spherical $k$-means clustering,  which is the problem of  clustering data points  using Cosine Similarity, one can use~\cite{SPKM}; 
and for   $k$-mode clustering, which is clustering using Hamming Distance, one can use $k$-mode~\cite{kmode}, on the sketch obtained by $\binsketch$.
 \paragraph*{Other Applications}
Beyond the above-noted applications, sketching techniques have been used widely in application such as 
 Spam detection~\cite{broder1997resemblance}, 
compressing social networks~\cite{ChierichettiKLMPR09} 
all 
 pair similarity~\cite{BayardoMS07}, Frequent Itemset Mining~\cite{ChakaravarthyPS09}. 
As $\binsketch$   offers significant speed-up in dimensionality reduction time and simultaneously provides a succinct and accurate sketch, it helps in scaling up the performance of the respective algorithms.

 \section{Background}\label{sec:background}
 \begin{center}
 \scalebox{0.85}{
\begin{tabular}{|c|l|}
\hline
 \multicolumn{2}{|c|}{\bf Notations}\\
 \hline
 $\N$ & dimension of  the compressed data.  \\
 \hline
$\tb$ & sparsity bound. \\ 
\hline
$u[i]$ & $i$-th bit position  of binary  vector $u.$\\
\hline
$|u|$ & number of $1$'s in the binary vector $u$.\\
\hline
$\Cos(u, v)$ & Cosine similarity between    $u$ and $v.$\\
\hline
$\JS(u, v)$ & Jaccard similarity between   $u$ and $v.$\\
\hline
 $\Ham(u, v)$& Hamming distance between  $u$ and $v.$\\
 \hline
$\IP(u, v)$ & Inner product between   $u$ and $v.$\\
\hline
 \end{tabular}
 }
\end{center}
 \paragraph{SimHash  for Cosine similarity~\cite{simhash,GoemansW95}.} 
 The Cosine similarity between a
 pair of vectors $u,v\in \mathbb{R}^d$ is defined as $\langle u, v \rangle /
\|u\|_2 \cdot \|v\|_2$.
To compute a sketch of  a  vector $u$, $\simhash$~\cite{simhash} generates a random vector $ {r}\in \{-1, +1\}^d$,  
with each component chosen uniformly at random from $\{-1, +1\}$ and a 1-bit sketch is computed as
\begin{align*}
 \simhash^{(r)}(u)=\begin{cases}
    1, & \text{if $\langle u,  {r}\rangle $} \geq 0.\\
    0, & \text{otherwise}.
  \end{cases}
\end{align*}

$\simhash$ was shown to preserve inner product in the following manner~\cite{GoemansW95}.
Let $\theta$ be an angle such that $\cos \theta = \IPS{u,v}/\|u\| \cdot \|v\|$.
Then,
\[
 \Pr[\simhash^{(r)}(u) = \simhash^{(r)}(v)]=1-\frac{\theta}{\pi}, 
\]


 \paragraph{MinHash for Jaccard and Cosine similarity.}
The Jaccard similarity between a
 pair of set $u, v\subseteq \{1, 2, \ldots d\}$
is defined as  
$\JS(u, v)=\frac{|u \cap v|}{|u \cup v|}.$
Broder et al.~\cite{BroderCFM98} suggested an algorithm -- $\minhash$ --
  to compress a collection of sets while preserving the  Jaccard similarity between any pair of sets.
 Their technique includes taking a random permutation 
of $\{1, 2, \ldots, d\}$ and assigning a value to each set which maps to  
minimum  under that permutation. 
\begin{definition}[Minhash~\cite{BroderCFM98}]\label{defn:minwise}
    Let $\pi$ be a random permutation over $\{1, \ldots, d\}$, then for a set $u\subseteq \{1,\ldots d\}$
    $h_\pi(u) = \arg\min_i \pi(i)$ for $i \in u$.
\end{definition}
   
It was then shown by Broder et al.~\cite{BroderCFM98,BroderCPM00} that
    \begin{align*}\label{eq:minhash1}
 \Pr[h_\pi(u)=h_\pi(v)]=\frac{|u\cap v|}{|u \cup v|}.
\end{align*}

Exploiting a similarity between Jaccard similarity of sets and Cosine similarity
of binary vectors, it was shown how to use $\minhash$ for constructing sketches
for Cosine similarity in the case of sparse binary data~\cite{Shrivastava014}.

\paragraph{BCS for sparse binary data~\cite{JS_BCS,KulkarniP16}.} 
For sparse binary dataset, $\BCS$ offers a sketching algorithm  that simultaneously 
preserves Jaccard similarity, Hamming distance and inner product. 

 \begin{definition}[BCS]\label{defi:bcs}
     Let $\N$ be the number of buckets. Choose a random mapping $b$ from $\{1
     \ldots d\}$ to $\{1, \ldots N\}$. Then a vector $u \in \{0, 1\}^d$ is
 compressed to a vector $u_s \in \{0, 1\}^{\N}$ as follows:
 \[u_s[j] = \sum_{i : b(i) = j} u[i]  \pmod 2.\]  
\end{definition}

\section{Analysis of $\binsketch$}\label{section:analysis}

Let $a$ and $b$ denote two binary vectors in $d$-dimension, and $|a|$, $|b|$ denotes the number of $1$ in $a$ and $b$. 
Let $a_s, b_s\in \{0, 1\}^{\N}$  denote the compressed representation of $a$ and $b$, where  $\N$ denotes the  compression length (or reduced dimension).
In this section we will explain our sketching method $\binsketch$ and give
theoretical bounds on its efficacy.

\begin{definition}[$\binsketch$]
    Let $\pi$ be a random mapping from $\{1, \ldots d\}$ to $\{1, \ldots {\N}\}$.
    Then a vector $a \in \{0,1\}^d$ is compressed into a vector $a_s \in
    \{0,1\}^{\N}$ as $$a_s[j] = \bigvee_{i : \pi(i)=j} a[i]$$
\end{definition}

Constructing a $\binsketch$ for a dataset involves first, generating a random
mapping $\pi$, and second, hashing each vector in the dataset using $\pi$.
There could be ${\N}^d$ possible mappings, so choosing $\pi$ requires
$O(\log({\N}^d)) = O(d \log N)$ time and that many random bits. Hashing a vector $a$ involves
only looking at the non-zero bits in $a$ and that step takes time $O(\tb)$ since
$|a| \le \tb$. Both these costs compete favorably with the existing algorithms as
tabulated in Table~\ref{tab:comparision}.

\subsection{Inner-product similarity}

The sketches, $a_s$'s do not quite ``preserve'' inner-product by themselves,
but are related to the latter in the following sense. We will use $n$ to denote
$1-\frac{1}{\N} \in (0,1)$; it will be helpful to note that $n \to 1$ as $\N$
increases.

\begin{lemma}\label{lemma:E_ip_ab}
\begin{align*}
    & 1.~\E(|a_s|/\N) = (1-n^{|a|})\\
    & 2.~\E(\IPS{a_s, b_s}/\N) = \\
    & (1-n^{|a|})(1-n^{|b|}) + n^{|a|+|b|}
    \left[\left( \frac{1}{n}\right)^{\IPS{a,b}} - 1 \right] = \\
    & 1 - n^{|a|} - n^{|b|} + n^{|a|+|b|+\IPS{a,b}}
\end{align*}
\end{lemma}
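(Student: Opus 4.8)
The plan is to compute the two expectations by analyzing, for each coordinate $j \in \{1,\ldots,\N\}$, the probability that $a_s[j] = 1$ (and similarly that $a_s[j] = b_s[j] = 1$), then sum over $j$ using linearity of expectation.

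\textbf{Part 1.} First I would fix a coordinate $j$. By definition, $a_s[j] = 0$ exactly when none of the non-zero bits of $a$ is mapped to $j$ by $\pi$. Since $\pi$ maps each of the $d$ source coordinates independently and uniformly to $\{1,\ldots,\N\}$, each of the $|a|$ non-zero positions of $a$ avoids bucket $j$ independently with probability $1 - \tfrac1\N = n$. Hence $\Pr[a_s[j]=0] = n^{|a|}$, so $\Pr[a_s[j]=1] = 1 - n^{|a|}$. Summing over the $\N$ coordinates and dividing by $\N$ gives $\E(|a_s|/\N) = 1 - n^{|a|}$.

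\textbf{Part 2.} For the inner product, by linearity it suffices to compute $\Pr[a_s[j]=1 \text{ and } b_s[j]=1]$ for a fixed $j$; then $\E(\IPS{a_s,b_s}/\N)$ equals that probability. I would compute the complementary event $\Pr[a_s[j]=0 \text{ or } b_s[j]=0]$ via inclusion–exclusion:
\[
\Pr[a_s[j]=1, b_s[j]=1] = 1 - \Pr[a_s[j]=0] - \Pr[b_s[j]=0] + \Pr[a_s[j]=0, b_s[j]=0].
\]
The first two terms are $n^{|a|}$ and $n^{|b|}$ from Part 1. For the last term, $a_s[j]=0$ and $b_s[j]=0$ together means that \emph{no} position in the support of $a$ and \emph{no} position in the support of $b$ maps to $j$; the union of these supports has size $|a| + |b| - \IPS{a,b}$ (since $\IPS{a,b}$ counts the common non-zero positions), and because each such position independently avoids $j$ with probability $n$, we get $\Pr[a_s[j]=0, b_s[j]=0] = n^{|a|+|b|-\IPS{a,b}}$. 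Substituting,
\[
\Pr[a_s[j]=1, b_s[j]=1] = 1 - n^{|a|} - n^{|b|} + n^{|a|+|b|-\IPS{a,b}},
\]
which is the third (simplified) form in the lemma; a line of algebra rewrites it as the first form, using $n^{|a|+|b|-\IPS{a,b}} = n^{|a|+|b|}\,(1/n)^{\IPS{a,b}}$ and expanding $(1-n^{|a|})(1-n^{|b|}) = 1 - n^{|a|} - n^{|b|} + n^{|a|+|b|}$.

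The main subtlety — the only place one could slip — is the counting of independent "avoidance" events in the joint term: one must recognize that a position in $a \cap b$ should be counted \emph{once}, not twice, so the relevant exponent is $|a \cup b| = |a|+|b|-\IPS{a,b}$ rather than $|a|+|b|$; the independence needed for the product form then follows immediately from the independence of the images of distinct source coordinates under $\pi$. Everything else is linearity of expectation and routine algebraic rearrangement.
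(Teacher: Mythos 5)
Your proof is correct, and it follows the same overall skeleton as the paper's: fix a coordinate $j$, compute the relevant per-coordinate probability, and sum with linearity of expectation. The one place you diverge is in how the joint probability $\Pr[a_s[j]=1,\ b_s[j]=1]$ is obtained. The paper splits the event into two disjoint cases --- either some element of $a\cap b$ lands in bucket $j$ (probability $1-n^{|a\cap b|}$), or none does but some element of $a\setminus b$ and some element of $b\setminus a$ both do (probability $n^{|a\cap b|}(1-n^{|a\setminus b|})(1-n^{|b\setminus a|})$) --- and then expands. You instead apply inclusion--exclusion to the complementary events and observe that $\Pr[a_s[j]=0,\ b_s[j]=0]=n^{|a\cup b|}=n^{|a|+|b|-\IPS{a,b}}$. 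Both routes rest on the same fact (images of distinct source coordinates under $\pi$ are independent and uniform) and land on the identical closed form; yours arrives at the simplified third expression of the lemma directly and avoids having to argue disjointness of cases, while the paper's case split makes the ``greyish bin'' structure explicit, which it then reuses in the martingale argument of Lemma~\ref{lemma:3}. Your closing remark correctly isolates the only real pitfall --- counting positions in $a\cap b$ once, not twice, in the exponent of the joint-avoidance term.
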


\begin{proof}
    It will be easier to identify $a \in \{0,1\}^d$ as a subset of $\{1, \ldots
    d\}$. 
    The $j$-th bit of $a_s$ can be set only by some element in $a$ which can
    happen with probability $(1-(1-\tfrac{1}{\N})^{|a|})$. The $j$-th bit of both $a_s$ and $b_s$ is set if it is set by some element in
$a\cap b$, or if it is set simultaneously by some element in $a \setminus (a\cap
    b) = a \setminus b$ and by
another element in $b \setminus (a\cap b)$. This translates to the following probability
    that some particular bit is set in both $a_s$ and $b_s$.
\begin{align*}
    & \left(1-n^{|a\cap b|}\right) +
     n^{|a\cap b|} \left(1- n^{|a\setminus b|}   \right)\left( 1-n^{|b \setminus
    a|}   \right)\\
    & = 1 - n^{|a|} - n^{|b|} + n^{|a|+|b|-|a \cap b|}\\
    & = (1-n^{|a|})(1-n^{|b|}) + n^{|a|+|b|}\left( \frac{1}{n^{|a \cap b|}} - 1
    \right)
\end{align*}
The lemma follows from the above probabilities using the linearity of expectation.
\end{proof}

Note that the above lemma
allows us to express $\IPS{a,b}$ as
$$\IPS{a,b} = |a| + |b| - \tfrac{1}{\ln n} \ln
\left(n^{|a|}+n^{|b|}+\frac{\E(\IPS{a_s,b_s})}{\N}-1 \right)$$
Algorithm~\ref{alg:ip_est} now explains how to use this result to approximately calculate
$\IPS{a,b}$ using their sketches $a_s$ and $b_s$.

    \begin{algorithm}
	\hspace*{\algorithmicindent} \textbf{Input:} Sketches $a_s$ of $a$ and $b_s$ of
	$b$
	\begin{algorithmic}[1]
	    \State Estimate $\E[|a_s|]$ as $n_{a_s}=|a_s|$, $\E[|b_s|]$ as $n_{b_s}=|b_s|$
	    \State Estimate $\E[\IPS{a_s, b_s}]$ as $n_{a_s,b_s}=\IPS{a_s, b_s}$
	    \State Approximate $|a|$ as $n_a=\ln(1-\frac{n_{a_s}}{\N})/\ln(n)$ 
	    and \mbox{$|b|$ as $n_b =\ln(1-\frac{n_{b_s}}{\N})/\ln(n)$}
	    \State \Return approximation of $\IPS{a,b}$ as $$n_{a,b} = n_a + n_b - \tfrac{1}{\ln n} \ln
		\left(n^{n_a}+n^{n_b}+\frac{n_{a_s,b_s}}{\N}-1 \right)$$ 
	\end{algorithmic}
	\caption{$\binsketch$ estimation of $\IP(a,b)$ \label{alg:ip_est}}
    \end{algorithm}

We will prove that Algorithm~\ref{alg:ip_est} estimates $\IPS{a,b}$ with high
accuracy and confidence if we use $\N=\tb \sqrt{\frac{\tb}{2} \ln
\frac{2}{\delta}}$; $\delta$ can be set to any desired probability
of error and we assume that the sparsity $\psi$ is not too small, say at least
20. Our first result proves that the $n_{a_s}$ estimated above is a
good approximation of $\E[|a_s|]$; exactly identical result holds for $b_s$ and
$n_{b_s}$ too.

\begin{lemma}\label{lemma:1} With probability at least $1-\delta$, it holds that $$\Big|
    n_{a_s} - \E[|a_s|] \Big| < \sqrt{\frac{\tb}{2}\ln \frac{2}{\delta}}$$
\end{lemma}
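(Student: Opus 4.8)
The plan is to view $|a_s|$ as a function of finitely many independent random choices and apply a bounded-differences (Azuma--Hoeffding / McDiarmid) concentration inequality; this is the ``martingale-based analysis'' alluded to in the introduction. Recall that $n_{a_s} = |a_s|$ and that the only randomness is in the mapping $\pi$. First I would observe that $a_s$, and hence $|a_s|$, depends only on the values $\{\pi(i) : i \in a\}$, i.e.\ on the images of the (at most $\tb$) non-zero coordinates of $a$; the images $\pi(i)$ for $i \notin a$ are irrelevant to $a_s$. Writing the support of $a$ as $\{i_1, \ldots, i_m\}$ with $m = |a| \le \tb$, the variables $\pi(i_1), \ldots, \pi(i_m)$ are i.i.d.\ uniform on $\{1, \ldots, \N\}$, and $|a_s| = f(\pi(i_1), \ldots, \pi(i_m))$ where $f$ counts the number of distinct values among its arguments.

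Next I would verify the bounded-differences condition: changing a single coordinate $\pi(i_k)$ while keeping the others fixed changes the number of distinct values by at most $1$ (deleting one value from the multiset of images can drop the count by at most one, and inserting one value can raise it by at most one). Hence $f$ satisfies the Lipschitz property with constant $c_k = 1$ for every $k$.

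Applying McDiarmid's inequality --- equivalently, Azuma--Hoeffding to the Doob martingale $\E[\,|a_s| \mid \pi(i_1), \ldots, \pi(i_k)\,]$ for $k = 0, 1, \ldots, m$ --- gives, for every $t > 0$,
\[
\Pr\big[\, \big|\, |a_s| - \E[|a_s|] \,\big| \ge t \,\big] \;\le\; 2\exp\!\left(\frac{-2 t^2}{\sum_{k=1}^{m} c_k^2}\right) \;=\; 2\exp\!\left(\frac{-2 t^2}{m}\right) \;\le\; 2\exp\!\left(\frac{-2 t^2}{\tb}\right),
\]
where the last step uses $m \le \tb$. Finally I would choose $t$ so that the right-hand side equals $\delta$, namely $t = \sqrt{\tfrac{\tb}{2}\ln\tfrac{2}{\delta}}$, which yields $\Pr[\,|n_{a_s} - \E[|a_s|]| \ge t\,] \le \delta$ and hence the claimed strict inequality holds with probability at least $1-\delta$. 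The identical argument, applied to the support of $b$, gives the stated result for $b_s$ and $n_{b_s}$.

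The only real subtlety --- and the step I would be most careful about --- is the very first one: justifying cleanly that $|a_s|$ is a deterministic function of just the images of $a$'s support, so that the concentration inequality is applied over exactly $m \le \tb$ independent coordinates rather than all $d$ of them. Running the bounded-differences argument over all $d$ coordinates of $\pi$ would only give a bound scaling with $\sqrt{d}$, which is precisely the dependence on the ambient dimension that the whole construction is designed to avoid. Everything else is a routine instantiation of McDiarmid's inequality.
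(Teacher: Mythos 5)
Your proof is correct and follows essentially the same route as the paper: the paper also reduces $|a_s|$ to the number of non-empty bins when the $m=|a|\le\tb$ balls corresponding to the support of $a$ are thrown into $\N$ bins, and applies the Azuma--Hoeffding bound for the Doob martingale with Lipschitz constant $1$, i.e.\ $\Pr[|F-\E[F]|\ge\lambda]\le 2\exp(-2\lambda^2/m)$. The subtlety you flag --- that only the images of the support matter, so the exponent involves $m\le\tb$ rather than $d$ --- is exactly the point the paper's proof relies on as well.
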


\begin{proof}
The proof of this lemma is a simple adaptation of the computation of the
expected number of non-empty bins in a balls-and-bins experiment that is found
in textbooks and done using Doob's martingale.
Identify the random mapping $\pi(a)$, where the number of 1's in $a$ is
    denoted by $|a|$, as throwing $|a|$ black balls (and $d-|a|$ ``no''-balls), one-by-one, into $\N$ bins chosen
    uniformly at random. Supposing we only consider the black balls in the bins, then $a_s[j]$ is an indicator variable for the event
    that the $j$-th bin is non-empty and the number of non-empty bins
    can be shown to be concentrated around their expectation~\footnote{Using $F$ to denote the number of non-empty bins and $m$ the
number of balls, Azuma-Hoeffding inequality states that $\Pr\Big[|F - \E[F]| \ge \lambda\Big] \le 2
\exp(-2\lambda^2/m)$ (see Probability and Computing, Mitzenmacher and Upfal,
Cambridge Univ. Press).}. Since the number of non-empty bins correspond to
    $|a_s|$, this concentration bound can be directly applied for proving the
    lemma.

Let $\mathcal{E}$ denote the event in the statement of the lemma. Then,
\[
    \Pr[\bar{\mathcal{E}}] \le \Pr\left[ \Big| |a_s| -
    \E\big[|a_s|\big] \Big| \ge \sqrt{\frac{|a|}{2}\ln
    \frac{2}{\delta}}\right] \le \delta 
\]
where $|a| \le \tb$ is used for the first inequality and the stated bound, with
    $m=|a|$, is used for the second inequality.
\end{proof}

Similar, but more involved, approach can be used to prove that $n_{a_s,b_s}=\IPS{a_s,b_s}$ is a
good estimation of $\E[\IPS{a_s,b_s}]$. 

\begin{lemma}\label{lemma:3}
    With probability at least $1-\delta$, it holds that
    $$\Big| n_{a_s,b_s} -
\E[\IPS{a_s,b_s}] \Big| < \sqrt{\frac{\tb}{2}\ln \frac{2}{\delta}}$$
\end{lemma}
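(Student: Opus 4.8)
The plan is to reuse the machinery behind Lemma~\ref{lemma:1}, which already gives sharp concentration for the number of occupied bins in a balls-and-bins experiment. The enabling observation is the elementary identity $\IPS{a_s,b_s}=|a_s|+|b_s|-|a_s\vee b_s|$, valid because $a_s$ and $b_s$ are $0/1$ vectors; and since $\bigvee$ commutes with the random mapping $\pi$, the vector $a_s\vee b_s$ is exactly the $\binsketch$ of the set $a\cup b$. Hence $n_{a_s,b_s}=\IPS{a_s,b_s}$ is a signed combination of three ``size of a set read off from its sketch'' quantities, one each for $a$, $b$, and $a\cup b$, and $\E[\IPS{a_s,b_s}]$ is the same signed combination of their expectations (by linearity of expectation).

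Next I would apply Lemma~\ref{lemma:1} to each of the three terms and union-bound. For $|a_s|$ and $|b_s|$ the experiments throw at most $\tb$ balls, so each concentrates within $\sqrt{\tfrac{\tb}{2}\ln\tfrac{2}{\delta'}}$ of its mean with probability $1-\delta'$; for $|a_s\vee b_s|=|(a\cup b)_s|$ the experiment throws $|a\cup b|\le 2\tb$ balls, giving a window of $\sqrt{\tb\ln\tfrac{2}{\delta'}}$. A triangle inequality over the three then gives $\bigl|n_{a_s,b_s}-\E[\IPS{a_s,b_s}]\bigr|=O\!\bigl(\sqrt{\tb\ln(1/\delta')}\bigr)$ with probability $1-3\delta'$; after renaming $\delta'$ this already has the shape of the claim, and in particular suffices for the $O(\sqrt{\tb\ln(6/\rho)})$ accuracy asserted in Theorem~\ref{thm:IP_est}.

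The hard part is matching the stated constant $\sqrt{\tfrac{\tb}{2}\ln\tfrac{2}{\delta}}$ exactly, since the route above sees up to $2\tb$ balls through the $a\cup b$ term and thereby loses a factor $\sqrt{2}$. To remove it I would argue in two stages. Condition on $\pi$ restricted to the indices of $a$; this fixes $a_s$ and the occupied bins of $a\cap b$, and one checks that the conditional mean equals $n^{|b\setminus a|}\,|(a\cap b)_s|+(1-n^{|b\setminus a|})\,|a_s|$ --- a convex combination of two balls-and-bins quantities, hence $1$-Lipschitz in them, so Lemma~\ref{lemma:1} bounds its deviation from $\E[\IPS{a_s,b_s}]$. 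Conditionally, the only remaining randomness is $\pi$ on $b\setminus a$, at most $\tb$ balls; there I would run a Doob martingale for $\IPS{a_s,b_s}$, the one ``more involved'' point being the verification that re-assigning a single index changes $\IPS{a_s,b_s}$ by at most $1$ (it touches only the two bins it leaves and enters; leaving a bin can only clear an $a$- or a $b$-bit there, hence only decrease that bin's term, by at most $1$, and entering can only raise the new bin's term by at most $1$). For the prescribed $\N=\tb\sqrt{\tfrac{\tb}{2}\ln\tfrac{2}{\rho}}$ the coefficient $1-n^{|b\setminus a|}\le\tb/\N$ is $o(1)$, so the $|a_s|$ contribution is lower order and the dominant deviation is the promised $\sqrt{\tfrac{\tb}{2}\ln\tfrac{2}{\delta}}$, any residual slack being absorbed into the $O(\cdot)$ of Theorem~\ref{thm:IP_est}.
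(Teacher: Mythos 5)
Your first decomposition is correct and is a genuinely different route from the paper's. The paper proves this lemma directly with a single Doob martingale: it colours the indices of $a\cap b$, $a\setminus b$, $b\setminus a$ as grey, white and black balls, calls a bin ``greyish'' if it holds a grey ball or both a white and a black ball, observes that $\IPS{a_s,b_s}$ equals the number of greyish bins and that this count is $1$-Lipschitz in each ball placement, and applies Azuma--Hoeffding once. Your identity $\IPS{a_s,b_s}=|a_s|+|b_s|-|(a\vee b)_s|$, together with the (correct) observation that sketching commutes with $\vee$, instead reduces the lemma to three invocations of Lemma~\ref{lemma:1} plus a union bound. That is arguably cleaner --- no new martingale is needed --- at the price of a $1-3\delta'$ confidence and a constant of roughly $2+\sqrt{2}$ in front of $\sqrt{\tfrac{\tb}{2}\ln\tfrac{2}{\delta'}}$. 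Since the lemma is only consumed through the $O(\sqrt{\tb\ln(1/\delta)})$ bounds of Lemma~\ref{lemma:IP_estimation} and Theorem~\ref{thm:IP_est}, this loss is immaterial; it is also worth noting that the paper's own argument involves up to $|a\cup b|\le 2\tb$ balls, so the Azuma--Hoeffding bound it cites likewise yields only $\sqrt{\tb\ln\tfrac{2}{\delta}}$ rather than the stated $\sqrt{\tfrac{\tb}{2}\ln\tfrac{2}{\delta}}$ --- neither proof truly nails that constant.

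The gap is in your second stage, the attempt to recover the exact constant. Your conditional mean $\E[\IPS{a_s,b_s}\mid \pi|_a] = n^{|b\setminus a|}\,|(a\cap b)_s| + (1-n^{|b\setminus a|})\,|a_s|$ is computed correctly, and the $1$-Lipschitz check for the second-stage martingale over the $|b\setminus a|\le\tb$ remaining balls is fine. But the first-stage fluctuation is not lower order: the small coefficient $1-n^{|b\setminus a|}=O(\tb/\N)$ multiplies only the $|a_s|$ term, whereas the $|(a\cap b)_s|$ term carries the coefficient $n^{|b\setminus a|}\approx 1$, and $|(a\cap b)_s|$ itself deviates from its mean by up to $\Theta\bigl(\sqrt{|a\cap b|\ln(1/\delta)}\bigr)$. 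Hence the stage-one deviation is of the same order as the stage-two one, the two must be added by the triangle inequality, and you land at roughly $2\sqrt{\tfrac{\tb}{2}\ln\tfrac{2}{\delta}}$ at confidence $1-2\delta$ --- no better than your simpler three-term union bound. If the exact constant mattered, a genuinely different tool would be needed (e.g., a variance-sensitive or self-bounding concentration inequality exploiting that the greyish-bin count is at most $\tb$, which is presumably what the paper's closing remark in its proof is gesturing at); as it stands, you should present the first argument and accept the weaker constant.
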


\begin{proof}
For a given $a,b \in \{0,1\}^d$, lets
partition $\{1,\ldots d\}$ into parts $C$ (consisting of positions at which both
$a$ and $b$ are 1), $D$ (positions at which $a$ is 1 and $b$ is 0), $E$
(positions at which $a$ is 0 and $b$ is 1) and $F$ (the rest). Any random
mapping $\pi$ can treated as throwing $|C|$ grey balls, $|D|$ white balls,
$|E|$ black balls, and $d-|C|-|D|-|E|$ ``no''-balls randomly into $\N$ bins.
Suppose we say that a bin is ``greyish'' if it either contains some grey
ball or both a white and a black ball. The number of common 1-bits in $a_s$ and
$b_s$ (that is $n_{a_s,b_s}=\IPS{a_s,b_s}$) is now equal to the number of greyish
bins. Observe that when any ball lands in some bin, say $j$, the number of
greyish bins either remains same or increases by 1; therefore, we can say that
the count of the greyish bins satisfies Lipschitz condition. This allows us to
apply Azuma-Hoeffding inequality as above and prove the lemma; we will also need the fact that the number of greyish bins is at most $\tb$. 
\end{proof}

The next lemma allows us to claim that our estimation of $|a|$ is also within
reasonable bounds. It should be noted that our sketches $|a_s|$ do not
explicitly save the number of 1's in $a$, so it is necessary to compute this number from our sketches; furthermore, since this estimate is not used elsewhere, we
do not mandate it to be an integer either.

\begin{lemma}\label{lemma:2}
    With probability at least $1-\delta$, it holds that
    $$\Big| |a| - n_a \Big| < \frac{4}{\tb \ln \tfrac{1}{n}} =
    4\sqrt{\frac{\tb}{2} \ln \frac{2}{\delta}}$$
\end{lemma}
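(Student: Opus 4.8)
The plan is to regard $n_a$ as the image of $n_{a_s}$ under the smooth, increasing function $f(x)=\ln\!\left(1-x/\N\right)/\ln n$ and to push the concentration bound of Lemma~\ref{lemma:1} through $f$ by a mean-value estimate. Note first that Lemma~\ref{lemma:E_ip_ab} gives $\E[|a_s|]/\N = 1-n^{|a|}$, which rearranges to $|a| = \ln\!\left(1-\E[|a_s|]/\N\right)/\ln n = f\!\left(\E[|a_s|]\right)$, while by definition $n_a = f(n_{a_s})$. Hence $\bigl||a|-n_a\bigr| = \bigl|f(\E[|a_s|]) - f(n_{a_s})\bigr|$, and it suffices to bound how much $f$ can stretch the gap $\bigl|n_{a_s}-\E[|a_s|]\bigr|$.

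Next I would record two deterministic facts. First, on $[0,\tb]$ the function $f$ is differentiable with $f'(x) = \dfrac{-1}{\ln n\,(\N-x)} = \dfrac{1}{\ln(1/n)\,(\N-x)} > 0$ (using $n<1$). Second, $n_{a_s}=|a_s|\le|a|\le\tb$, because the support of $a_s$ is the image under $\pi$ of the support of $a$, and $\E[|a_s|] = \N(1-n^{|a|}) \le |a| \le \tb$ by Bernoulli's inequality. Consequently both endpoints $n_{a_s}$ and $\E[|a_s|]$, hence every point $\xi$ lying between them, are in $[0,\tb]$, where $\N-\xi \ge \N-\tb = \tb(K-1)$, writing $K := \sqrt{\tfrac{\tb}{2}\ln\tfrac{2}{\delta}}$ and recalling $\N = \tb K$.

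Now condition on the event of Lemma~\ref{lemma:1}, which occurs with probability at least $1-\delta$ and guarantees $\bigl|n_{a_s}-\E[|a_s|]\bigr| < K$. By the mean value theorem there is a $\xi$ between $n_{a_s}$ and $\E[|a_s|]$ such that
\[
  \bigl||a|-n_a\bigr| = |f'(\xi)|\cdot\bigl|n_{a_s}-\E[|a_s|]\bigr| < \frac{1}{\ln(1/n)\,(\N-\xi)}\cdot K \le \frac{1}{\ln(1/n)}\cdot\frac{K}{\tb(K-1)}.
\]
Since $\tb\ge 20$ and $\delta\le 1$ force $K \ge \sqrt{10\ln 2} > 2$, we have $K/(K-1)\le 2\le 4$, so $\bigl||a|-n_a\bigr| < \dfrac{4}{\tb\ln(1/n)}$. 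The numerical form stated in the lemma then follows from $\ln(1/n) = -\ln\!\left(1-\tfrac{1}{\N}\right) \ge \tfrac{1}{\N}$, which yields $\dfrac{4}{\tb\ln(1/n)} \le \dfrac{4\N}{\tb} = 4\sqrt{\tfrac{\tb}{2}\ln\tfrac{2}{\delta}}$ (so the ``$=$'' in the statement is really ``$\le$'', coming from the approximation $\ln(1/n)\approx 1/\N$).

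The one place to be careful — and the only mild obstacle — is the denominator $\N-\xi$ in $f'(\xi)$: the argument breaks if $\xi$ can approach $\N$, which is precisely why one needs the uniform bounds $n_{a_s}\le\tb$ and $\E[|a_s|]\le\tb$ (so $\xi\le\tb\ll\N$), together with $\tb\ge 20$ to keep $K/(K-1)$ a harmless constant. Everything else is a routine mean-value-theorem error-propagation estimate layered on top of Lemma~\ref{lemma:1}.
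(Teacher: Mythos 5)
Your proof is correct, but it takes a genuinely different route from the paper's. The paper works directly with the identity $n^{|a|}-n^{n_a}=[n_{a_s}-\E(|a_s|)]/\N$, splits into the two cases $|a|\le n_a$ and $n_a\le |a|$, lower-bounds the left side by $\frac12(1-n^{\Delta})$ in each case (the second case requiring the separately proved Observation~\ref{obs:1}, $n^{n_a}\ge\frac12$, whose proof is deferred to the appendix), and then extracts $\Delta$ via $\ln(1+x)\ge x/(x+1)$. You instead write $|a|=f(\E[|a_s|])$ and $n_a=f(n_{a_s})$ for $f(x)=\ln(1-x/\N)/\ln n$ and push the deviation of Lemma~\ref{lemma:1} through $f$ by the mean value theorem, controlling $f'(\xi)=1/(\ln\tfrac1n\,(\N-\xi))$ via the deterministic facts $n_{a_s}=|a_s|\le|a|\le\tb$ and $\E[|a_s|]=\N(1-n^{|a|})\le|a|\le\tb$. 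This buys you a unified treatment of both sign cases and eliminates any dependence on Observation~\ref{obs:1} (replacing it with the immediate observation that the sketch support is the image of the vector's support, plus Bernoulli's inequality); your intermediate constant $K/(K-1)\le 2$ is in fact slightly sharper than the paper's $\tfrac{2}{\psi-2}\cdot\tfrac{\psi}{2}\le 2$-type relaxation, though both land on the same stated bound $\tfrac{4}{\tb\ln(1/n)}$. The paper's argument is more elementary (no calculus) and develops $n^{n_a}\ge\tfrac12$ along the way, a fact of independent use; your observation that the displayed ``$=$'' should really be ``$\le$'' (since $\ln\tfrac1n\ge\tfrac1\N$ rather than equality) is also accurate and matches how the paper itself closes its proof.
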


\begin{proof}Based on Lemma~\ref{lemma:E_ip_ab} and Algorithm~\ref{alg:ip_est},
    $n^{|a|}-n^{n_a} = [n_{a_s} - \E(|a_s|)]/N$. For the proof we use the upper bound
    given in Lemma~\ref{lemma:1} that holds with probability at least
    $1-\delta$. We need a few results before proceeding that are based on the standard inequality
    $\ln(1-x) \le -x$ for $0 < x < 1$.

    \begin{obs}
	$\ln \frac{1}{n} \ge \frac{1}{N}$ ($\because$ $\ln n = \ln(1-1/N) \le
	-\frac{1}{N}$)
    \end{obs}

    \begin{obs}\label{obs:2}$n_a=\ln (1-\frac{n_{a_s}}{N})/\ln n \le
	\frac{n_{a_s}}{N}/\ln(\frac{1}{n})$. Since $n_{a_s} \le N$, we get that
	$n_a \le N$.
    \end{obs}

    \begin{obs}\label{obs:1}$n^{n_a} \ge \frac{1}{2}$.
    \end{obs}
    
A proof of the above observation follows using simple algebra and the result of Lemma~\ref{lemma:1}.  We defer it to the full version of the paper. 
    We use these observations for proving two possible cases of the
    lemma. We will use the notation $\Delta = \Big|n_a - |a|\Big|$.

    \noindent{\bf case (i) $|a| \le n_a$:} In this case $\Delta=n_a - |a|$ and $$n^{|a|}-n^{n_a} = [n_{a_s} - \E(|a_s|)]/N$$
    For the R.H.S., $[n_{a_s} - \E(|a_s|)]/N \le 1/\psi$ by
    Lemma~\ref{lemma:1}.\\
    For the L.H.S., we can write $n^{|a|} - n^{n_a} =
    n^{|a|}(1-n^{n_a - |a|}) \ge n^{\psi}(1-n^{\Delta})$ as $|a| \le
    \psi$. Furthermore, $n^{\psi} = (1-\frac{1}{N})^{\psi} \ge
    1 - \frac{\psi}{N} > \frac{1}{2}$ since $\frac{\psi}{N} =
    1/\sqrt{\frac{\psi}{2}\ln\frac{2}{\delta}} < \frac{1}{2}$ for reasonable values
    of $\psi$ and $\delta$.\\
    Combining the bounds above we get the inequality 
    \mbox{$\frac{1}{2}(1-n^{\Delta}) < 1/\psi$} that we will further process below.
    
    \noindent{\bf case (ii) $n_a \le |a|$:} In this case $\Delta=|a|-n_a$ and $$n^{n_a} - n^{|a|} = [\E(|a_s|)
    - n_{a_s}]/N$$
    As above, R.H.S. is at most $1/\psi$ using Lemma~\ref{lemma:1} and L.H.S. can be written as $n^{n_a}(1 -
    n^{\Delta})$. Further using Observation~\ref{obs:1} we get the inequality,
    $\frac{1}{2}(1-n^{\Delta}) \le 1/\psi$.

    For both the above cases we obtained that $\frac{1}{2}(1-n^{\Delta}) \le
    1/\psi$, i.e., $1-n^{\Delta} \le 2/\psi$. This gives us
    that $\Delta\ln n \ge \ln(1-2/\psi) \ge \frac{-2/\psi}{1-2/\psi} =
    \frac{-2}{\psi-2}$ employing the
    known inequality $\ln(1+x) \ge \frac{x}{x+1}$ for any $x > -1$. Since $n \in
    (0,1)$, we get the desired upper bound $\Delta \le \frac{2}{\psi-2}
    \frac{1}{\ln \frac{1}{n}} \le \frac{4}{\psi \ln \frac{1}{n}}$ (since
    $\frac{\psi}{2} \le \psi-2$ for $\psi \ge 4$) $\le 4\sqrt{\frac{\tb}{2} \ln
    \frac{2}{\delta}}$ (using Observation~\ref{obs:1}).
\end{proof}

Of course a similar result holds for $|b|$ and $n_b$ as well.
The next lemma similarly establishes the accuracy of our estimation of
$\IPS{a,b}$.

\begin{lemma}\label{lemma:IP_estimation}
    With probability at least $1-3\delta$, it holds that
    $$\Big|\IPS{a,b} - n_{a,b} \Big| < 14\sqrt{\frac{\tb}{2}\ln\frac{2}{\delta}}$$
\end{lemma}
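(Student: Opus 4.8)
The plan is to use the closed-form identities supplied by Lemma~\ref{lemma:E_ip_ab} and Algorithm~\ref{alg:ip_est}, namely $\IPS{a,b} = |a| + |b| - \tfrac{1}{\ln n}\ln X$ and $n_{a,b} = n_a + n_b - \tfrac{1}{\ln n}\ln Y$, where I abbreviate $X = n^{|a|}+n^{|b|}+\E[\IPS{a_s,b_s}]/\N - 1$ and $Y = n^{n_a}+n^{n_b}+n_{a_s,b_s}/\N - 1$. Subtracting and using $-\tfrac{1}{\ln n}=\tfrac{1}{\ln(1/n)}$, the triangle inequality gives
\[
\big|\IPS{a,b} - n_{a,b}\big| \;\le\; \big|\,|a| - n_a\,\big| + \big|\,|b| - n_b\,\big| + \tfrac{1}{\ln(1/n)}\,\big|\ln X - \ln Y\big|,
\]
so I would bound the three summands separately. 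The first two are immediate from Lemma~\ref{lemma:2} applied to $a$ and to $b$: each is at most $4\sqrt{\tfrac{\tb}{2}\ln\tfrac{2}{\delta}}$.

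The third summand carries the real work. First I would control $|X-Y|$ by splitting it as $|n^{|a|}-n^{n_a}| + |n^{|b|}-n^{n_b}| + |\E[\IPS{a_s,b_s}] - n_{a_s,b_s}|/\N$; for the first two pieces I would reuse the identity $n^{|a|}-n^{n_a} = (n_{a_s}-\E[|a_s|])/\N$ from the proof of Lemma~\ref{lemma:2} together with Lemma~\ref{lemma:1}, and for the third I would invoke Lemma~\ref{lemma:3}. Since $\N = \tb\sqrt{\tfrac{\tb}{2}\ln\tfrac{2}{\delta}}$, each of the three pieces is at most $1/\tb$, so $|X-Y| < 3/\tb$. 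Next I would lower-bound $\min(X,Y)$: by Lemma~\ref{lemma:E_ip_ab}, $X = n^{|a|+|b|-\IPS{a,b}} = n^{|a\cup b|} \ge n^{2\tb} \ge 1 - 2\tb/\N$ by Bernoulli's inequality, which exceeds $\tfrac12$ in the regime assumed in the paragraph preceding Lemma~\ref{lemma:1} (sparsity $\tb$ ``not too small'' and $\delta$ not too large), and then $Y \ge X - |X-Y| \ge \tfrac12$ in the same regime. The mean value theorem then gives $|\ln X - \ln Y| \le |X-Y|/\min(X,Y) \le 6/\tb$, and combining with the Observation $\ln(1/n) \ge 1/\N$ yields $\tfrac{1}{\ln(1/n)}|\ln X - \ln Y| \le 6\N/\tb = 6\sqrt{\tfrac{\tb}{2}\ln\tfrac{2}{\delta}}$.

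Adding the three bounds gives $\big|\IPS{a,b} - n_{a,b}\big| < (4+4+6)\sqrt{\tfrac{\tb}{2}\ln\tfrac{2}{\delta}} = 14\sqrt{\tfrac{\tb}{2}\ln\tfrac{2}{\delta}}$. For the confidence, note that every inequality above rests on just three random events --- Lemma~\ref{lemma:1} holding for $a$, Lemma~\ref{lemma:1} holding for $b$, and Lemma~\ref{lemma:3} holding (Lemma~\ref{lemma:2} being itself a consequence of Lemma~\ref{lemma:1}) --- each failing with probability at most $\delta$; a union bound over the three failure events yields the claimed $1-3\delta$. I expect the main obstacle to be this third summand, and within it the lower bound on $\min(X,Y)$: one must keep both $X$ and $Y$ bounded away from $0$ uniformly, which is exactly where the hypothesis that $\tb$ is not too small is used, and the final constant $14$ is rather sensitive to how sharply $\min(X,Y)$, $|X-Y|$, and the factor $1/\ln(1/n)$ are estimated, since any slack propagates into a larger constant.
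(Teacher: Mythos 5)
Your proof is correct and follows essentially the same route as the paper's: the same triangle-inequality decomposition into $\bigl||a|-n_a\bigr|+\bigl||b|-n_b\bigr|$ plus a logarithmic term, the same $3/\tb$ bound on $|X-Y|$ via Lemmas~\ref{lemma:1}--\ref{lemma:3}, the same $\tfrac12$ lower bound on the denominator, the same $4+4+6=14$ accounting, and the same three-event union bound. The only (cosmetic) difference is that you bound $|\ln X-\ln Y|$ by $|X-Y|/\min(X,Y)$ via the mean value theorem, whereas the paper states the inequality with $\max(U,V)$ in the denominator; your version is the technically correct one, and the paper's lower bound on $\max(U,V)$ in fact establishes the needed lower bound on $\min(U,V)$ as well, so the two arguments coincide.
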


We get the following from Algorithm~\ref{alg:ip_est} and
Lemma~\ref{lemma:E_ip_ab}.
\begin{align*}
    \IPS{a,b} & = |a|+|b|+ \tfrac{1}{\ln \tfrac{1}{n}} \ln\left[n^{|a|} + n^{|b|} +
    \frac{\E[\IPS{a_s,b_s}]}{\N}-1\right]\\
    n_{a,b} & = n_a + n_b + \tfrac{1}{\ln \tfrac{1}{n}}\ln \left( n^{n_a} +
    n^{n_b} + \frac{n_{a_s,b_s}}{\N} - 1 \right)
\end{align*}
in which $|a| \approx n_a$ (Lemma~\ref{lemma:2}), $|b| \approx n_b$ (similarly),
and $\E[\IPS{a_s,b_s}] \approx n_{a_s,b_s}$ (Lemma~\ref{lemma:3}), each
happening with
probability at least $1-\delta$.
The complete proof that $n_{a,b}$ is a good approximation of $\IPS{a,b}$ is mostly
algebraic analysis of the above facts and we defer it  the full version of the paper. 

Theorem~\ref{thm:IP_est} is a direct consequence of
Lemma~\ref{lemma:IP_estimation} for reasonably large $\tb$ (say, beyond $20$) and small
$\delta$ (say, less than $0.1$).

\subsection{Hamming distance} The Hamming distance and the inner product
similarity of two binary vectors $a$ and $b$ are related as
\[
 \Ham(a, b)=|a|+|b|-\IP(a, b) \label{eq:Hamming}
\]

The technique used in the earlier subsection can be used to estimate the Hamming
distance in a similar manner.

    \begin{algorithm}
	\hspace*{\algorithmicindent} \textbf{Input:} Sketches $a_s$ of $a$ and $b_s$ of
	$b$
	\begin{algorithmic}[1]
	    \State Calculate $n_{a}$, $n_{b}$, $n_{a,b}$ as done in Algorithm~\ref{alg:ip_est}
	    \State \Return approx. of $\Ham(a,b)$ as 
		$ ham_{a,b} = n_a + n_b - n_{a,b}$
	\end{algorithmic}
	\caption{$\binsketch$ estimation of $\Ham(a,b)$ \label{alg:ham_est}}
    \end{algorithm}

\subsection{Jaccard similarity} 
The Jaccard similarity between a pair of binary vectors $a$ and $b$ can be
computed from their Hamming distance and their inner product.
\[
 \JS(a, b)=\frac{\IP(a, b)}{\Ham(a, b)+\IP(a, b)}
\]

This paves way for an algorithm to compute Jaccard similarity from $\binsketch$.
    \begin{algorithm}
	\hspace*{\algorithmicindent} \textbf{Input:} Sketches $a_s$ of $a$ and $b_s$ of
	$b$
	\begin{algorithmic}[1]
	    \State Calculate $n_{a,b}$ using Algorithm~\ref{alg:ip_est}
	    \State Calculate $ham_{a,b}$ using Algorithm~\ref{alg:ham_est}
	    \State \Return approx. of $\JS(a,b)$ as 
		$ JS_{a,b} = \dfrac{n_{a,b}}{n_{a,b} + ham_{a,b}}$
	\end{algorithmic}
	\caption{$\binsketch$ estimation of $\JS(a,b)$ \label{alg:js_est}}
    \end{algorithm}

\subsection{Cosine similarity} The cosine similarity between a pair binary
vectors $a$ and $b$ is defined as: 
 \[
     \Cos(a, b)=\IP(a, b)\Big/\sqrt{|a|\cdot|b|}
 \]

An algorithm for estimating cosine similarity from binary sketches is straight
forward to design at this point.
    \begin{algorithm}
	\hspace*{\algorithmicindent} \textbf{Input:} Sketches $a_s$ of $a$ and $b_s$ of
	$b$
	\begin{algorithmic}[1]
	    \State Calculate $n_a$, $n_b$, $n_{a,b}$ as done in Algorithm~\ref{alg:ip_est}
	    \State \Return approx. of $\Cos(a,b)$ as 
		$ cos_{a,b} = n_{a,b}\Big/\sqrt{n_a \cdot n_b}$
	\end{algorithmic}
	\caption{$\binsketch$ estimation of $\Cos(a,b)$ \label{alg:cos_est}}
    \end{algorithm}

It should be possible to prove that Algorithms \ref{alg:ham_est},
\ref{alg:js_est} and \ref{alg:cos_est} are accurate and low-error estimations of
Hamming distance, Jaccard similarity and cosine similarity, respectively;
however, those analysis are left out of this paper.

 \section{Experiments}\label{sec:experiment}
  
\begin{figure}[!ht]
\centering
\includegraphics[height=15cm,width=9cm]{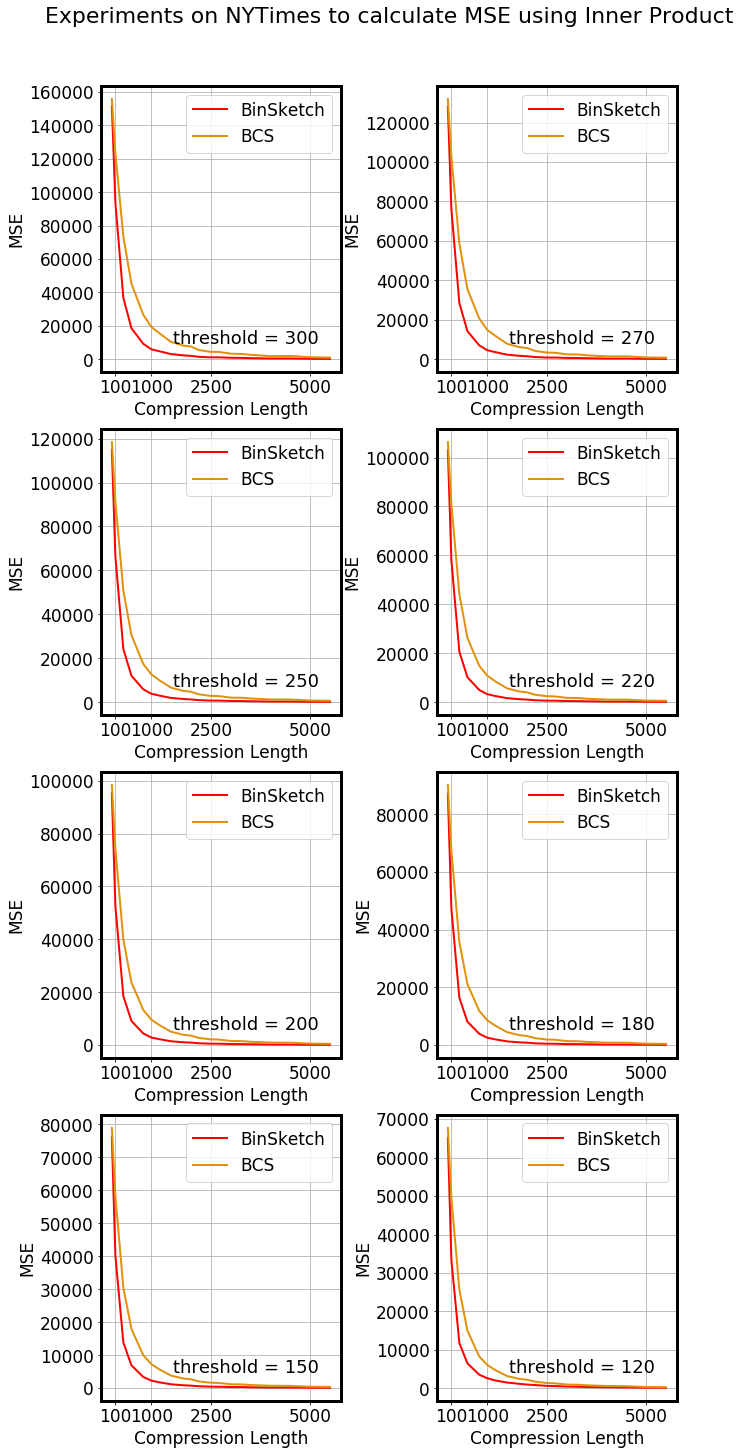}
\caption{\footnotesize{Comparison of $\MSE$ measure on  NYTimes datasets. A lower value is an indication of better performance.}}
\label{fig:MSE_IP_NYTimes}
\end{figure}

\paragraph{Hardware description} We performed our experiments on a machine having the following configuration: 
CPU: Intel(R) Core(TM) i5-3320M CPU @ 2.60GHz x 4; 
Memory: 7.5 GB;
OS: Ubuntu 18.04; 
Model: Lenovo Thinkpad T430.

To reduce the effect of
randomness, we repeated each experiment several times and took
the average.
Our implementations did not employ any special optimization.
   \paragraph*{Datasets}
 The experiments were performed on publicly available datasets - namely,  
 NYTimes news articles (number of points = $300000$, dimension =  $102660$),
 Enron Emails (number of points = $39861$, dimension =  $28102$), and   
 KOS blog entries (number of points = $3430$, dimension =  $6906$) from the UCI machine learning 
 repository~\cite{UCI}; and BBC News Datasets (number of points = $2225$, dimension = $9635$)~\cite{bbc}. 
 We considered the entire corpus of KOS and BBC News datasets, while for
NYTimes, ENRON datasets we sampled $5000$ data points.
\paragraph{Competing Algorithms} For our experiments we have used three   similarity measures: 
Jaccard Similarity,  Cosine Similarity, and Inner Product.  
 For the Jaccard Similarity, $\minhash$~\cite{BroderCFM98}, Densified One Permutation Hashing ($\DOPH$) -- a faster variant of $\minhash$ --~\cite{DOPH},  $\BCS$~\cite{JS_BCS}, and $\odd$~\cite{oddsketch}  were  the competing algorithms. 
   $\odd$  is  two-step in nature,  which takes the sketch obtained by running $\minhash$ on the original data as input,  
and outputs binary sketch which maintains an estimate of the original Jaccard similarity.  As suggested by authors, we use the number of $\minhash$ permutations  $k = \N/(4(1-J))$, where $J$ is the similarity threshold.
{We upper bound $k$ with $5500$ which is the maximum number of permutations used by $\minhash$.}   
  For the Cosine Similarity, $\simhash$~\cite{simhash},  Circulant Binary Embedding ($\CBE$) -- a faster variant of $\simhash$ --~\cite{CBE}, 
     $\minhash$~\cite{Shrivastava014},  $\DOPH$~\cite{DOPH} in the algorithm of~\cite{Shrivastava014} instead of $\minhash$,  were  the competing algorithms.      
  For the Inner Product, $\BCS$~\cite{JS_BCS},   Asymmetric MinHash~\cite{ShrivastavaWWW015}, and Asymmetric  $\DOPH$ ($\DOPH$~\cite{DOPH} in the algorithm of~\cite{ShrivastavaWWW015}),  were  the competing algorithms.

\subsection{Experiment $1$: Accuracy of Estimation}\label{subsection:neg_log} 
In this task, we evaluate the \textit{fidelity of the estimate}  of $\binsketch$ on various similarity regimes. 

\paragraph{Evaluation Metric}  
 To understand the behavior of $\binsketch$ on various similarity regimes, we extract similar pairs -- pair of data objects whose similarity is higher than certain threshold --from the datasets. We used Cosine,  Jaccard, and Inner Product as our measures.  
For example: for Jaccard/Cosine case for the threshold value $0.95$,  we considered only those pairs whose    similarities are higher than $0.95$. We used mean square error $(\MSE)$ as our evaluation criteria.   Using $\binsketch$ and other candidate algorithms, we compressed the datasets to various values of compression length $\N$. 
We then calculated the $\MSE$ for all the algorithms, for different values of $\N$.   For example, in order to calculate the  $\MSE$ of $\binsketch$ with respect to the ground truth result, for every pair of data points, we calculated the square of the difference between their estimated similarities after the result of $\binsketch$,  and the corresponding ground truth similarity.  
 We added these values for all such pairs and calculated its mean. For Inner Product, we used this absolute value, and for Jaccard/Cosine similarity 
we computed its negative logarithm base $e$.  A smaller $\MSE$ corresponds to a larger   $-\log(\MSE)$, therefore, a higher value $-\log(\MSE)$ is an indication of better performance.

\paragraph{Insights}
 We summarize our results in Figures~\ref{fig:neglogMSE}, and \ref{fig:MSE_IP_NYTimes} for Cosine/Jaccard Similarity and Inner Product, respectively. 
For Cosine Similarity, $\binsketch$  consistently remains to be better than the other candidates. While for 
  Jaccard Similarity, it significantly outperformed \textit{w.r.t.} $\BCS$,  $\DOPH$ and $\odd$, while its performance was comparable \textit{w.r.t.} $\minhash$.
Moreover, for Inner product \ref{fig:MSE_IP_NYTimes} results, $\binsketch$ significantly outperformed \textit{w.r.t.} $\BCS$. 
\footnotetext{We observed a similar pattern for both $\MSE$ as well as Ranking experiments on other datasets/similarity measures as well. We defer those plot to the full version of the paper. }

 \subsection{Experiment $2$: Ranking}\label{subsection:ranking}
  \paragraph*{Evaluation Metric}
 In this experiment, given a dataset and a set of query points, the aim is to find all the points that are similar to the query points, 
 under the given similarity measure. To do so, we randomly, partition the dataset into two parts --  $90\%$ and $10\%$. The bigger partition
 is called as the \textit{training partition}, while the smaller one is called as \textit{querying partition}. We call each vector of the 
 querying partition as a query vector.  For each query vector, we compute the points in the training partition whose   similarities
 are higher than a certain threshold. 
For Cosine and Jaccard Similarity, we used the threshold values from the set $\{ 0.95, 0.9, 0.85, 0.8, 0.6, 0.5, 0.2, 0.1\}$.  
For Inner Product,    we first found out the maximum existing Inner product in the dataset, and then set the thresholds accordingly.
 For every query point, we first find all the similar points in the uncompressed dataset, which we call as ground truth result. We then compress the dataset, using the candidate algorithms, on various values of compression lengths. 
To evaluate the performance of the competing algorithms, we used the \textit{accuracy-precision-recall-$\F$ score}   as our standard measure. 
If the set $\mathcal{O}$ denotes the ground truth result (result on the uncompressed dataset), and the 
 set $\mathcal{O'}$ denotes the results on the compressed datasets, then 
 accuracy = ${|\mathcal{O}\cap \mathcal{O'}|}/{ |\mathcal{O}\cup \mathcal{O'}|}$, 
 precision = ${|\mathcal{O}\cap \mathcal{O'}|}/{|\mathcal{O'}|}$,  
 recall = ${|\mathcal{O}\cap \mathcal{O'}|}/{|\mathcal{O}|}$, and 
 $\F~\text{score}=(2\cdot \text{precision}\cdot \text{recall})/(\text{precision}+\text{recall}).$

\begin{figure*}
\centering
\includegraphics[height=7.4cm,width=16.2cm]{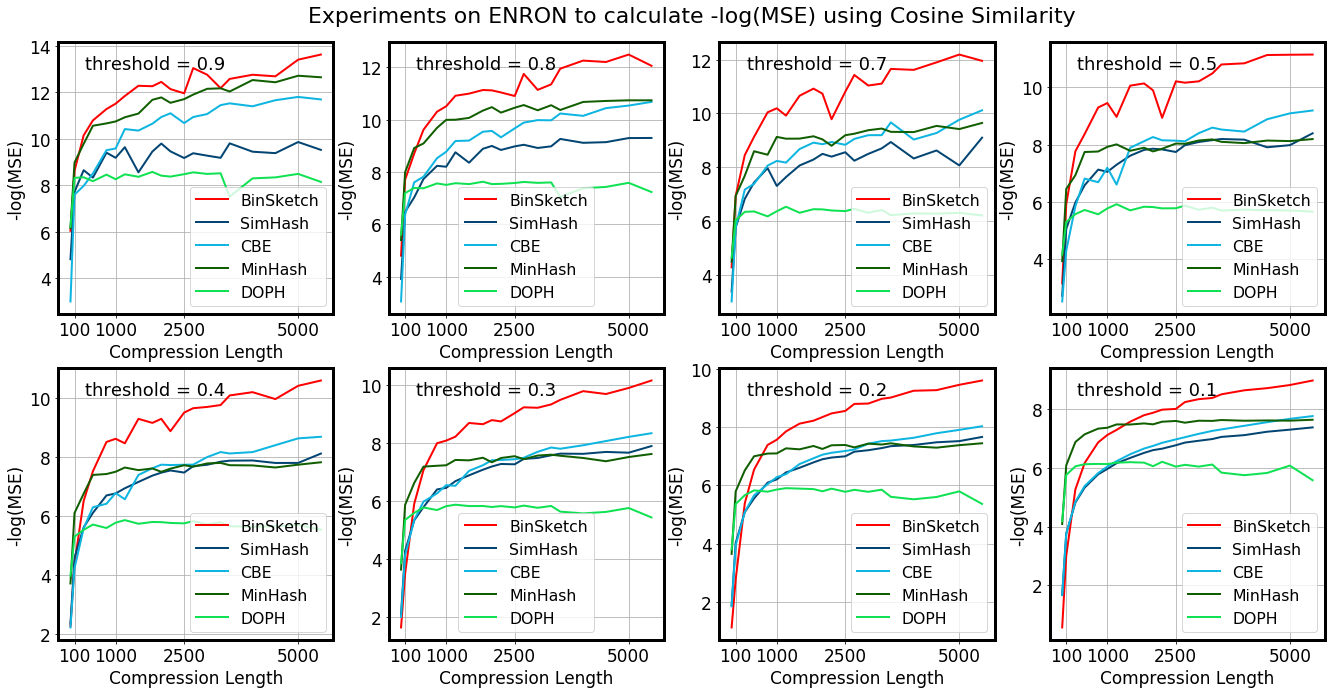}
\includegraphics[height=7.4cm,width=16.2cm]{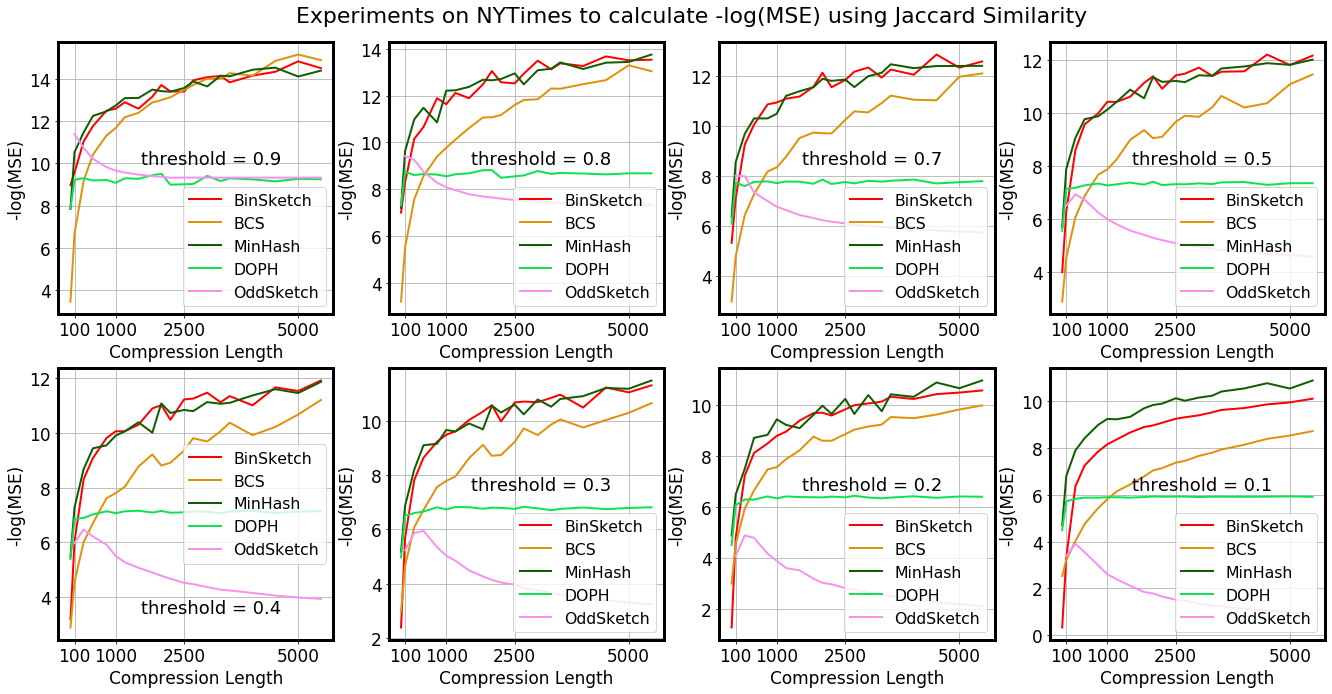}
\includegraphics[height=7.4cm,width=16.2cm]{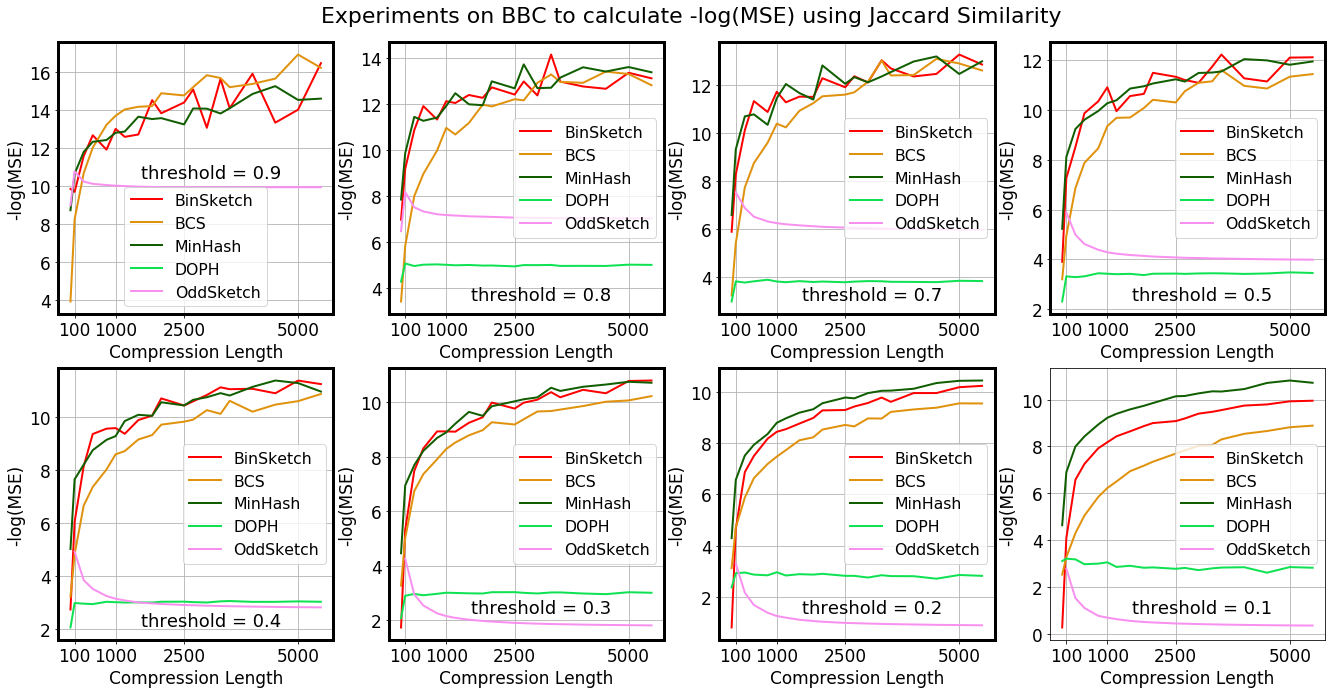}

\caption{\footnotesize{Comparison of $-\log(\MSE)$ measure on  Enron,  NYTimes, and BBC datasets. A higher value is an indication of better performance.}}
\label{fig:neglogMSE}
\end{figure*}

\begin{figure*}
\centering
\includegraphics[height=3.8cm,width=16.5cm]{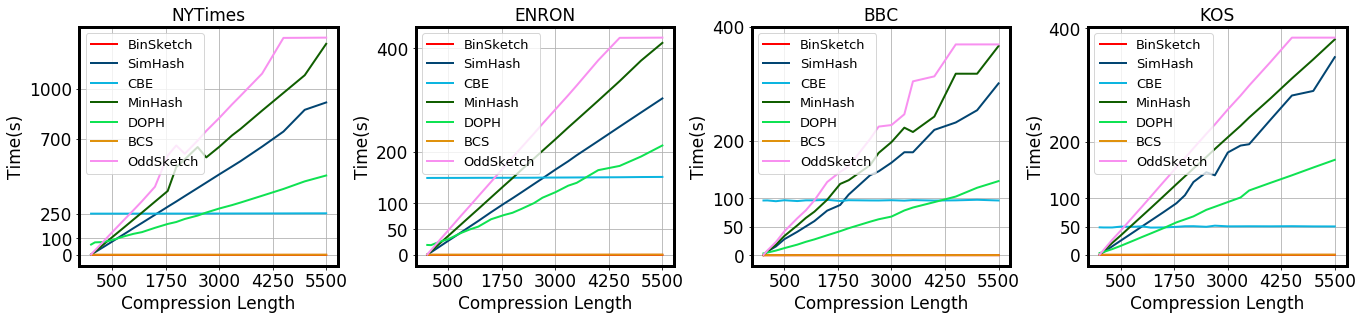}
\caption{\footnotesize{Comparison of compression times on NYTimes, ENRON, KOS and BBC datasets.}}
\label{fig:compressiontime}
\end{figure*}

\begin{figure*}
\centering
\includegraphics[height=8cm,width=16.5cm]{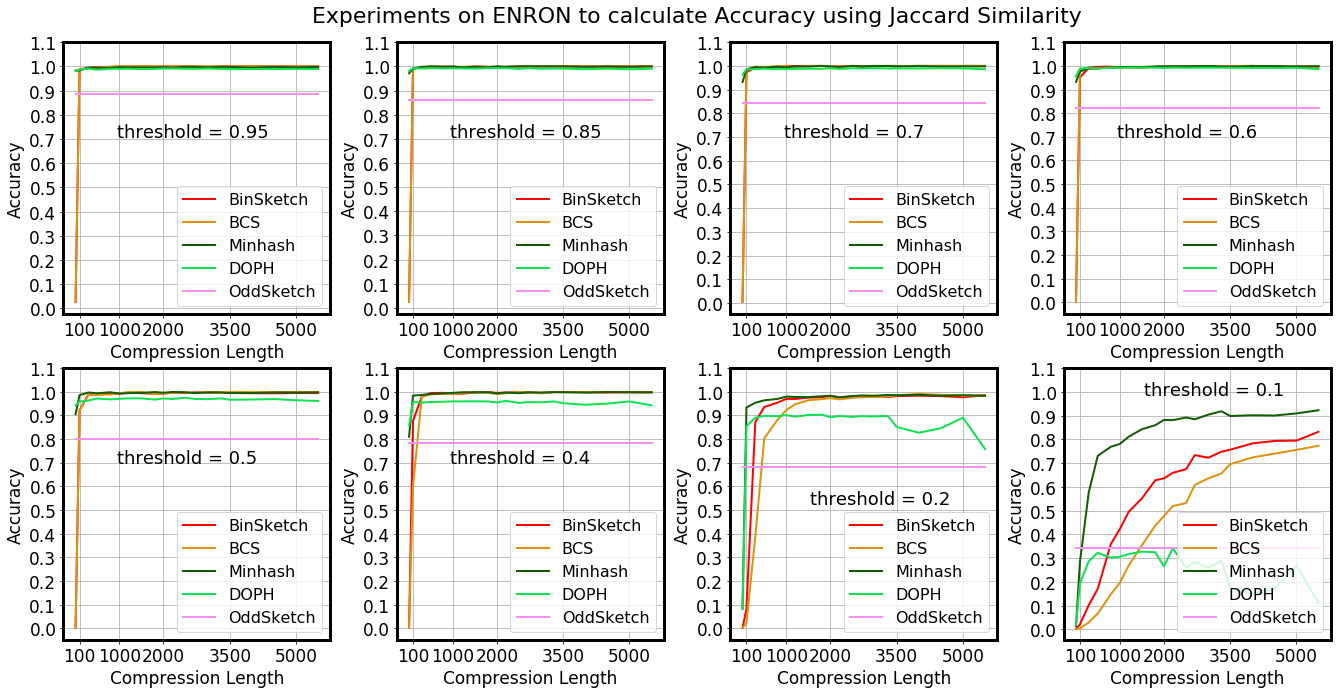}
\includegraphics[height=8cm,width=16.5cm]{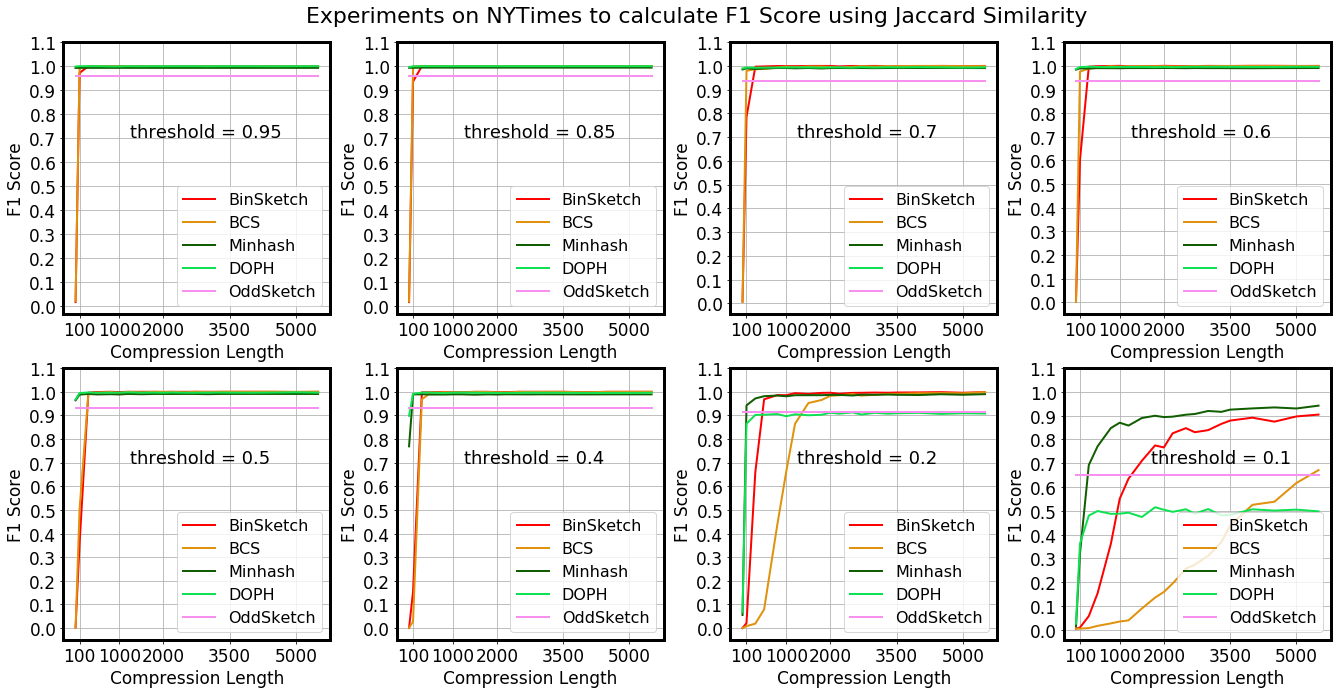}
\includegraphics[height=8cm,width=16.5cm]{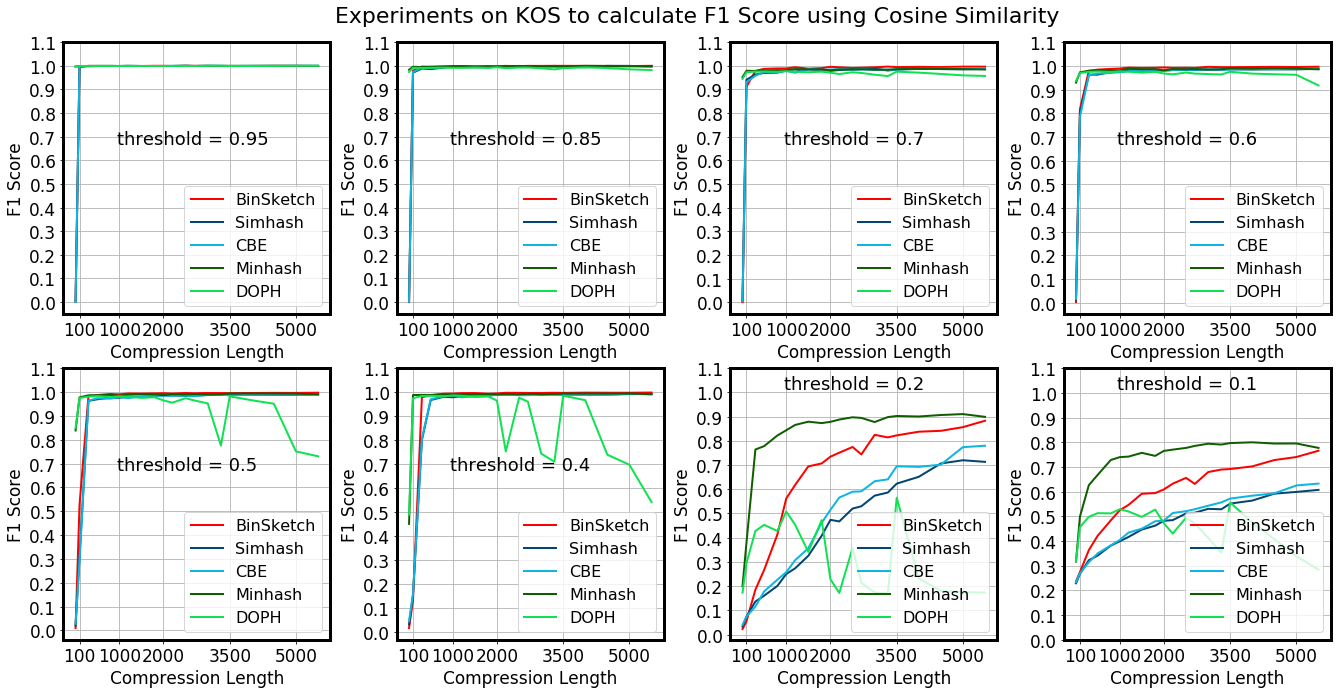}
\caption{\footnotesize{Comparison of Accuracy and $\F$ score measures on ENRON, NYTimes and KOS datasets.}}
\label{fig:Accuracy_main}
\end{figure*}
 
\paragraph*{Insights} 
 
We summarize Accuracy and $\F$ score results in Figure~\ref{fig:Accuracy_main}.
For Jaccard Similarity, on both Accuracy and $\F$ score measure, $\binsketch$ significantly outperformed $\BCS$,  $\DOPH$, and $\odd$ while 
its performance was comparable  \textit{w.r.t.} $\minhash$.  For Cosine  similarity, on higher and intermediate threshold values, 
$\binsketch$ outperformed  all the other candidate algorithms. 
However, on the lower threshold values, $\minhash$ offered the most accurate sketch followed by $\binsketch$.

\paragraph{Efficiency of $\binsketch$}  
We comment on the efficiency of $\binsketch$ with the other competing algorithms
and summarize our results in Figure~\ref{fig:compressiontime}.  
We noted the time required to compress the original dataset using all the competing algorithms. For a given compression length, the compression time of $\odd$ varies based on the similarity threshold. Therefore, we consider taking their average.  
We notice that the time required by $\binsketch$ and $\BCS$ is negligible for all values of $\N$ and on all the datasets. 
Compression time of $\CBE$ is very higher than ours, however, it is independent of the compression length $\N$.  
After excluding some initial compression lengths, the compression time of $\odd$ is the highest, and grows linearly with $\N$, as it requires running $\minhash$ on the original dataset. 
For the remaining algorithms, their respective compression time grows linearly with $\N$. 

\section{Summary and open questions}
In this work, we proposed a simple dimensionality reduction algorithm -- $\binsketch$ -- for sparse binary data. 
 $\binsketch$  offer an efficient dimensionality reduction/sketching algorithm, 
which compresses a given $d$-dimensional binary dataset to a relatively smaller $\N$-dimensional binary sketch, 
while simultaneously maintaining estimates for multiple similarity measures such as Jaccard Similarity, Cosine Similarity, 
Inner Product, and Hamming Distance, on the same sketch. 
The performance of $\binsketch$ was significantly better than $\BCS$~\cite{KulkarniP16,JS_BCS} while the compression (dimensionality reduction) time of these two algorithms were somewhat very comparable. 
$\binsketch$ obtained a significant speedup in  compression time \textit{w.r.t} other candidate algorithms ($\minhash$~\cite{BroderCFM98,Shrivastava014}, $\simhash$~\cite{simhash}, $\DOPH$~\cite{DOPH}, $\CBE$~\cite{CBE}) while it simultaneously offered a comparable performance guarantee.

 We want to highlight the error bound presented in Theorem~\ref{thm:IP_est} is due to a worst-case analysis, 
 which potentially can be  tightened. We state this as an open question of the paper. 
 Our experiments on 
 real datasets establish this. 
 For example, for the inner product (see Figure~\ref{fig:MSE_IP_NYTimes}), 
 we show that the Mean Square Error $(\MSE)$ is almost zero even for compressed dimensions that are much lesser than the bounds stated
 in the Theorem. 
Another important open question  is to  derive a lower
bound on the size of a sketch that is required to efficiently and accurately
derive similarity values from compressed sketches. 
 Given the simplicity of our method,
we hope that it will get adopted in practice. 

\balance
\bibliographystyle{plain}
\bibliography{reference} 
\clearpage
\newpage
\balance
\appendices
 \section{Proof of Observation~\ref{obs:1}}\label{appendix:obs_1}
In this section we prove that $n^{n_a} \ge 1/2$.
For this first we derive an upper bound of $\frac{1}{2}$ on $n_{a_s}/N$.

Let $P$ denote the expression $\sqrt{\frac{\tb}{2}\ln \frac{2}{\delta}}$
appearing in Lemma~\ref{lemma:1}. Using this lemma, $n_{a_s} \le \E(|a_s|) + P$.
Observe that $\E(|a_s|) = N(1-n^{|a|}) \le N(1-n^{\psi})$ since $|a| \le \psi$
and $n \in (0,1)$. Furthermore, since $n^{\psi} = (1-\frac{1}{N})^{\psi} \ge 1 -
\frac{\psi}{N} \ge \frac{1}{2}$, we get the upper bound $n_{a_s}/N \le \frac{1}{N} \left( N
\frac{\psi}{N} + P \right) = \frac{\psi}{N} + \frac{P}{N} =
\frac{1}{\sqrt{\frac{\psi}{2} \ln \frac{2}{\delta}}} + \frac{1}{\psi}$. For
reasonable values of $\delta$ and $\psi$, both $\psi$ and $\sqrt{\frac{\psi}{2}
\ln \frac{2}{\delta}}$ are at least 4; thus, we get the bound of $n_{a_s}/N \le
\frac{1}{2}$ and this leads us to the bound $n^{n_a} = 1 - n_{a_s}/N  \ge
\frac{1}{2}$.

\section{Proof of Lemma~\ref{lemma:IP_estimation}}

\label{appendix:proof-thm-ip-est}
In this section we derive an upper bound on
\begin{align*}
    B = & \Big| |a|-n_a ~+~ |b| - n_b ~+~ \\
    & \tfrac{1}{\ln \tfrac{1}{n}} \ln\left[n^{|a|} + n^{|b|} +
    \frac{\E[\IPS{a_s,b_s}]}{\N}-1\right] ~-~ \\
    & \tfrac{1}{\ln \tfrac{1}{n}}\ln \left[ n^{n_a} +
    n^{n_b} + \frac{n_{a_s,b_s}}{\N} - 1 \right] \Big|
\end{align*}

\begin{proof}[Proof]
    We first apply triangle inequality and Lemma~\ref{lemma:2} to obtain
    $$	B \le \frac{4/\psi}{\ln \tfrac{1}{n}} + \frac{4/\psi}{\ln \tfrac{1}{n}}
    + \frac{1}{\ln \tfrac{1}{n}} \left| \ln \frac{n^{n_a} +
	n^{n_b} + \frac{n_{a_s,b_s}}{\N} - 1}{n^{|a|} + n^{|b|} +
	\frac{\E[\IPS{a_s,b_s}]}{\N}-1} \right| $$

    Next we derive an upper bound for the last term for which we require the
    next few observations. Let $U$ denote $n^{n_a} +
	n^{n_b} + \frac{n_{a_s,b_s}}{\N} - 1$, $V$ denote $n^{|a|} + n^{|b|} +
	\frac{\E[\IPS{a_s,b_s}]}{\N}-1$, and $W$ denote $|\ln \frac{U}{V}|$.

    \begin{obs}By expanding $n=(1-\frac{1}{N})$ and employing $(1+x)^r \ge 1+rx$, we obtain that
	$n^{n_a} +
	n^{n_b} + \frac{n_{a_s,b_s}}{\N} - 1 \ge 1 - \frac{|a_s| + |b_s| +
	\IPS{a_s,b_s}}{N} > 0$ since $\IPS{a_s,b_s} \le |a_s|$.
    \end{obs}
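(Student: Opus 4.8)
The plan is to reduce the two assertions in the observation — the lower bound and the strict positivity of that lower bound — to the exact algebraic identities underlying the estimator of Algorithm~\ref{alg:ip_est}, handling them separately since they rest on different facts.

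First I would unwind the definitions in Algorithm~\ref{alg:ip_est}. Because $n_{a_s}=|a_s|$ and $n_a=\ln(1-\tfrac{n_{a_s}}{\N})/\ln n$, exponentiating yields the exact identity $n^{n_a}=1-\tfrac{|a_s|}{\N}$, and symmetrically $n^{n_b}=1-\tfrac{|b_s|}{\N}$; moreover $n_{a_s,b_s}=\IPS{a_s,b_s}$. This is where ``expanding $n=(1-\tfrac{1}{\N})$'' enters: the statement phrases the bound via Bernoulli's inequality $(1+x)^r\ge 1+rx$, but exponentiating the definition of $n_a$ gives the exact relation directly and loses nothing. Substituting these three facts into $U=n^{n_a}+n^{n_b}+\tfrac{n_{a_s,b_s}}{\N}-1$ collapses it to $U=1-\tfrac{|a_s|+|b_s|-\IPS{a_s,b_s}}{\N}$. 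The lower bound is then immediate: the gap $U-\bigl(1-\tfrac{|a_s|+|b_s|+\IPS{a_s,b_s}}{\N}\bigr)=\tfrac{2\IPS{a_s,b_s}}{\N}$ is nonnegative because the inner product of two binary vectors is nonnegative, giving $U\ge 1-\tfrac{|a_s|+|b_s|+\IPS{a_s,b_s}}{\N}$.

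The only substantive step is the strict positivity $|a_s|+|b_s|+\IPS{a_s,b_s}<\N$. Here I would use $\IPS{a_s,b_s}\le|a_s|$ (the number of shared ones cannot exceed the ones in $a_s$) to bound the left side by $2|a_s|+|b_s|$, and then import the quantitative estimate proved for Observation~\ref{obs:1} in Appendix~\ref{appendix:obs_1}, namely $\tfrac{|a_s|}{\N},\tfrac{|b_s|}{\N}\le \tfrac{1}{\sqrt{(\psi/2)\ln(2/\delta)}}+\tfrac{1}{\psi}$. In the operating regime of Theorem~\ref{thm:IP_est} ($\psi\ge 20$, $\delta<0.1$) this per-sketch ratio falls below $\tfrac{1}{3}$, so $2|a_s|+|b_s|<\N$ and positivity follows. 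I expect this to be the main obstacle: the crude bound $\tfrac{|a_s|}{\N}\le\tfrac{1}{2}$ from Observation~\ref{obs:1} alone only yields $\tfrac{3}{2}\N$ for the triple sum, so one genuinely needs the sharper sparsity-dependent bound from Appendix~\ref{appendix:obs_1}, together with the assumptions on $\psi$ and $\delta$, to push all three sketch quantities strictly below $\N$ simultaneously.
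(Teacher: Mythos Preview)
Your handling of the first inequality is cleaner than what the paper hints at: you correctly observe that $n^{n_a}=1-|a_s|/\N$ is an \emph{exact} identity by the very definition of $n_a$ in Algorithm~\ref{alg:ip_est}, so Bernoulli's inequality is not actually needed, and the stated lower bound follows by discarding the nonnegative term $2\IPS{a_s,b_s}/\N$. The paper's phrasing (``employing $(1+x)^r\ge 1+rx$'') is at best a roundabout way to reach the same point.

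For the strict positivity, your argument via the concentration estimate of Appendix~\ref{appendix:obs_1} is correct but heavier than necessary, and it makes positivity a probabilistic statement (holding only on the $1-\delta$ event of Lemma~\ref{lemma:1}). You overlooked a much simpler \emph{deterministic} route: since $a_s[j]=\bigvee_{i:\pi(i)=j}a[i]$, each $1$ in $a_s$ is witnessed by at least one $1$ in $a$ from a disjoint preimage, so $|a_s|\le |a|\le\psi$ always; likewise $|b_s|\le\psi$, and then $\IPS{a_s,b_s}\le|a_s|\le\psi$. Hence $|a_s|+|b_s|+\IPS{a_s,b_s}\le 3\psi<\N$ directly from the choice $\N=\psi\sqrt{(\psi/2)\ln(2/\delta)}$ in the regime $\psi\ge20$, $\delta<0.1$. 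This is presumably what the paper's terse ``since $\IPS{a_s,b_s}\le|a_s|$'' is gesturing toward (you are right that the hint as written is incomplete). The deterministic bound removes the need to import Appendix~\ref{appendix:obs_1} and gives positivity unconditionally, which is what one wants before taking a logarithm.
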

    
    \begin{obs}Using Lemma~\ref{lemma:E_ip_ab}, $n^{|a|} + n^{|b|} +
	\frac{\E[\IPS{a_s,b_s}]}{\N}-1 = n^{|a| + |b| + \IPS{a,b}} > 0$ for
	non-zero $a$ and $b$.
    \end{obs}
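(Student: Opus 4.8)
The plan is to obtain the claimed identity by a direct substitution of the closed form supplied by Lemma~\ref{lemma:E_ip_ab}. The second part of that lemma already provides the fully simplified expression
\[
\frac{\E[\IPS{a_s,b_s}]}{\N} = 1 - n^{|a|} - n^{|b|} + n^{|a|+|b|+\IPS{a,b}}.
\]
First I would take the left-hand side of the observation, namely $n^{|a|} + n^{|b|} + \E[\IPS{a_s,b_s}]/\N - 1$, and replace the term $\E[\IPS{a_s,b_s}]/\N$ by the right-hand side above. The two copies of $n^{|a|}$, the two copies of $n^{|b|}$, and the two constants then cancel in pairs, leaving exactly $n^{|a|+|b|+\IPS{a,b}}$, which establishes the asserted equality.

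For the strict positivity, I would simply note that $n = 1 - \tfrac{1}{\N}$ lies in $(0,1)$ and in particular is strictly positive; hence every real power of $n$ is strictly positive, and so $n^{|a|+|b|+\IPS{a,b}} > 0$. The hypothesis that $a$ and $b$ are non-zero merely guarantees that the exponent $|a|+|b|+\IPS{a,b}$ is a genuine positive integer, but positivity of the quantity holds for any exponent as long as $n>0$.

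Since the whole argument is a one-line cancellation followed by an elementary remark on the sign of a power, there is essentially no obstacle. The only point deserving care is to invoke the already-simplified form of the second equality in Lemma~\ref{lemma:E_ip_ab} (the last line, $1 - n^{|a|} - n^{|b|} + n^{|a|+|b|+\IPS{a,b}}$) rather than its intermediate factored version, so that the cancellation is immediate and no further algebra is needed.
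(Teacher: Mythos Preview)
Your proposal is correct and follows exactly the paper's own reasoning: the observation is stated with the tag ``Using Lemma~\ref{lemma:E_ip_ab}'' and is proved by the one-line substitution of the closed form $\E[\IPS{a_s,b_s}]/\N = 1 - n^{|a|} - n^{|b|} + n^{|a|+|b|+\IPS{a,b}}$, after which the cancellation is immediate and positivity follows from $n\in(0,1)$.
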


    These observations ensure that the terms inside the logarithm are indeed
    positive.

    Next we upper bound $W$ by employing the inequality $\left| \ln \frac{A}{B} \right| \le
    \frac{|A-B|}{\max(A,B)}$ that holds for non-negative $A,B$ and can be
    derived from the standard inequality $\ln x \le x - 1$ for $x > 0$. Here,
    set $A=U$ and $B=V$. Then, using triangle inequality
    \begin{align*}
	|U - V| & \le |n^{n_a} - n^{|a|}| + |n^{n_b} - n^{|b|}| +
	\frac{|\E[\IPS{a_s,b_s}] - n_{a_s,b_s}|}{N} \\
	& \le 3 \frac{1}{\psi} \text{( using Lemma~\ref{lemma:3} and the next
	observation)}
    \end{align*}

    \begin{obs}
	These claims appear in the proof of Lemma~\ref{lemma:2}: $|n^{|a|} -
	n^{n_a}| < \frac{1}{\psi}$ and $n^{n_a} \ge n^{|a|} - \frac{1}{\psi}$. 
	Similarly, $|n^{|b|} - n^{n_b}| < \frac{1}{\psi}$ and $n^{n_b} \ge
	n^{|b|} - \frac{1}{\psi}$.
    \end{obs}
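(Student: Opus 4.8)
The plan is to observe that all four claims are immediate corollaries of the single identity already extracted at the start of the proof of Lemma~\ref{lemma:2}, combined with the concentration bound of Lemma~\ref{lemma:1} and the specific choice of sketch length $\N$. First I would recall the two exact identities underlying Algorithm~\ref{alg:ip_est}: from part~1 of Lemma~\ref{lemma:E_ip_ab} we have $\E[|a_s|] = \N(1-n^{|a|})$, hence $n^{|a|} = 1 - \E[|a_s|]/\N$; and from the definition of $n_a$ in Algorithm~\ref{alg:ip_est} we have $\ln n \cdot n_a = \ln(1 - n_{a_s}/\N)$, hence $n^{n_a} = 1 - n_{a_s}/\N$. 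Subtracting gives the clean identity $n^{|a|} - n^{n_a} = (n_{a_s} - \E[|a_s|])/\N$, which is precisely the equation that opens the proof of Lemma~\ref{lemma:2}.

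The next step is to substitute the concentration bound. By Lemma~\ref{lemma:1}, with probability at least $1-\delta$ we have $|n_{a_s} - \E[|a_s|]| < P$, where $P = \sqrt{\tfrac{\tb}{2}\ln\tfrac{2}{\delta}}$. Dividing the identity above by $\N$ and using $\N = \tb\,P$ (the value fixed in Theorem~\ref{thm:IP_est}) immediately yields $|n^{|a|} - n^{n_a}| < P/\N = 1/\tb$, which is the first stated inequality. The second stated inequality, $n^{n_a} \ge n^{|a|} - 1/\tb$, is merely the lower-tail half of this two-sided bound and so follows with no further work. The two claims for $b$ and $n_b$ are obtained by repeating the identical argument with $b$ in place of $a$, since Lemma~\ref{lemma:1} and Lemma~\ref{lemma:E_ip_ab} hold symmetrically for both vectors.

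There is essentially no analytic obstacle here: the observation is a bookkeeping restatement of intermediate facts, and the only point requiring care is verifying that $\N$ was chosen exactly so that the ratio $P/\N$ collapses to $1/\tb$. The lone subtlety worth flagging is the probabilistic qualifier -- each of the four inequalities inherits the failure probability $\delta$ from Lemma~\ref{lemma:1}, so when these bounds are later consumed in the proof of Lemma~\ref{lemma:IP_estimation} one must account for the union bound over the $a$-event and the $b$-event rather than treating the inequalities as deterministic.
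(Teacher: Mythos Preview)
Your proposal is correct and follows exactly the route the paper itself takes: the observation is simply a pointer back to the identity $n^{|a|}-n^{n_a} = (n_{a_s}-\E[|a_s|])/\N$ established at the opening of the proof of Lemma~\ref{lemma:2}, combined with the Lemma~\ref{lemma:1} bound and the choice $\N=\tb\sqrt{\tfrac{\tb}{2}\ln\tfrac{2}{\delta}}$ so that the right-hand side is at most $1/\tb$. Your remark about the inherited failure probabilities is also apt and matches how the paper invokes the union bound at the end of the proof of Lemma~\ref{lemma:IP_estimation}.
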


    We need one final observation to compute $\max(U,V)$.
    \begin{obs}
	Using Lemma~\ref{lemma:3}, $\frac{n_{a_s,b_s}}{N} \ge
	\frac{\E(\IPS{a_s,b_s})}{N} - \frac{1}{\psi}$.
    \end{obs}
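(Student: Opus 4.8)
The plan is to obtain this one-sided bound directly from Lemma~\ref{lemma:3} combined with the specific sketch length $\N=\tb\sqrt{\tfrac{\tb}{2}\ln\tfrac{2}{\delta}}$ fixed earlier in the analysis. First I would invoke Lemma~\ref{lemma:3}, which guarantees, with probability at least $1-\delta$, that $\big|n_{a_s,b_s}-\E[\IPS{a_s,b_s}]\big| < P$, where I abbreviate $P=\sqrt{\tfrac{\tb}{2}\ln\tfrac{2}{\delta}}$. Keeping only the side of this two-sided bound that I need gives $n_{a_s,b_s} > \E[\IPS{a_s,b_s}] - P$.

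Next I would divide both sides by $\N$ to match the form stated in the observation, producing $\tfrac{n_{a_s,b_s}}{\N} > \tfrac{\E[\IPS{a_s,b_s}]}{\N} - \tfrac{P}{\N}$. The only remaining point is to evaluate the error term $\tfrac{P}{\N}$. Since $\N=\tb\,P$ by the chosen value of $\N$, this ratio telescopes to $\tfrac{P}{\tb\,P}=\tfrac{1}{\tb}$, which is exactly the $1/\psi$ appearing in the observation. Substituting this back yields the claimed inequality; note that I establish the strict $>$, from which the stated $\ge$ follows trivially.

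There is essentially no hard part here; the entire content of the observation is recognizing the cancellation $P/\N=1/\tb$ afforded by the particular sketch length, so it reduces to routine bookkeeping on top of Lemma~\ref{lemma:3}. The one thing worth stating explicitly is that this inequality holds on the \emph{same} high-probability event (probability at least $1-\delta$) as Lemma~\ref{lemma:3}; this matters because, when the observation is later chained together with the other observations inside the proof of Lemma~\ref{lemma:IP_estimation}, the failure probabilities must combine correctly via a union bound to yield the overall confidence $1-3\delta$.
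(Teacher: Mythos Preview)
Your proposal is correct and follows exactly the approach the paper intends: the observation is stated with only the hint ``Using Lemma~\ref{lemma:3}'' because the content is precisely the one-sided application of that lemma together with the identity $P/\N = 1/\tb$ coming from the choice $\N=\tb\sqrt{\tfrac{\tb}{2}\ln\tfrac{2}{\delta}}$. Your remark that the inequality lives on the same high-probability event as Lemma~\ref{lemma:3}, and hence feeds correctly into the union bound yielding $1-3\delta$, is also in line with the paper.
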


    Based on the last two observations we can compute
    \begin{align*}
	U = & n^{n_a} + n^{n_b} + \frac{n_{a_s,b_s}}{N} - 1 \\
    \ge & n^{|a|} + n^{|b|} +
    \frac{\E(\IPS{a_s,b_s})}{N} - \frac{3}{\psi} - 1 \\
	= & V - \frac{3}{\psi} = n^{|a|+|b|+\IPS{a,b}} - \frac{3}{\psi}
    \end{align*}
    Therefore, if $U \ge V$, then $\max(U,V) = U \ge V - \frac{3}{\psi}$ and if $V >
    U$, then $\max(U,V) = V \ge V - \frac{3}{\psi}$. This leads to:
    \begin{align*}
	\max(U,V) & \ge V - \frac{3}{\psi} =
	(1-\frac{1}{N})^{|a|+|b|+\IPS{a,b}} - \frac{3}{\psi} \\
	& \ge 1 - \frac{|a|+|b|+\IPS{a,b}}{N} - \frac{3}{\psi} \ge 1 -
	\frac{3\psi}{N} - \frac{3}{\psi}\\
	& \ge 1 - \frac{3}{\sqrt{\frac{\psi}{2} \ln \frac{2}{\delta}}} -
	\frac{3}{\psi}
    \end{align*}
    which is at least $\frac{1}{2}$ for reasonable values of $\psi$ and
$\delta$.

    Now we gather all the upper bounds of the expressions appearing in $B$ and
compute the upper bound as stated in the lemma.

    $$B \le \frac{4/\psi}{\ln \frac{1}{n}} + \frac{4/\psi}{\ln \frac{1}{n}} +
\frac{1}{\ln \frac{1}{n}}\frac{3/\psi}{1/2} = \frac{14/\psi}{\ln \frac{1}{n}}
\le \frac{14N}{\psi} = 14\sqrt{\frac{\psi}{2}\ln \frac{2}{\delta}}$$

    Of course this bound holds when the upper bounds on $(|a|-n_a)$,
    $(|b|-n_b)$ and $(\E[\IPS{a_s,b_s}] - n_{a_s,b_s})$ are correct and
    each of them is incorrect with probability at most $\delta$. Therefore,
    using Union-bound, we can say that our upper bound as required in the lemma
    can be incorrect with probability at most $3\delta$.
\end{proof}

\end{document}